
\documentclass[draft]{birkjour}
\usepackage[noadjust]{cite}
\usepackage{url}
\usepackage[T2A]{fontenc}
\usepackage[russian,main=english]{babel}    
\usepackage{amsmath,amssymb}
\usepackage{multirow}
\usepackage{amsfonts}
\usepackage{cite}
\usepackage{hyperref}
\usepackage{caption}
\input xy
\xyoption{all}
%
%
%
\newtheorem{thm}{Theorem}[section]
\newtheorem{cor}[thm]{Corollary}
\newtheorem{lem}[thm]{Lemma}
\newtheorem{prop}[thm]{Proposition}
\theoremstyle{definition}
\newtheorem{defn}[thm]{Definition}

\theoremstyle{remark}

\numberwithin{equation}{section}

\newcommand{\Cl}{C \kern -0.1em \ell}  



\DeclareMathOperator{\Tr}{\mathrm{Tr}}

\newcommand{\BZ}{\mathbb{Z}}
\newcommand{\BR}{\mathbb{R}}
\newcommand{\BC}{\mathbb{C}}

\newcommand{\ed}{\end{document}}


\newcommand{\SU}{\mathrm{SU}}

\newcommand{\SL}{\mathrm{SL}}

\newcommand{\Spin}{\mathbf{Spin}}


\newcommand{\su}{\mathfrak{su}}
\newcommand{\g}{\mathfrak{g}}
\newcommand{\spin}{\mathfrak{spin}}

\renewcommand{\sl}{\mathfrak{sl}}

\newcommand{\bra}[1]{\left \langle #1 \right |}
\newcommand{\ket}[1]{\left | #1 \right \rangle}

\begin{document}

%
%
%
%
%
%
%
%
%

\title[Isomorphisms of $\Spin\left( \frac{1}{2}\right) $ to $\SU(1,1)-\mbox{Boson}$]{ Isomorphisms of $\Spin\left( \frac{1}{2}\right) $ to $\SU(1,1)-\mbox{Boson}$:\\ Universal Enveloping and Kangni-type Transformation}%

\author[ F. A. Howard]{ Francis Atta Howard}

\address{
	University of Abomey-Calavi,\\
	International Chair in Mathematical Physics and Applications\\ 
	(ICMPA--UNESCO Chair),
	072 B.P. 50 \\ 
	Cotonou,   Benin Republic}
\email{ hfrancisatta@ymail.com; hfrancisatta@gmail.com}


\author[K. Kangni]{Kinvi Kangni}
\address{ 
	University of Felix Houphouet Boigny, \\
	UFR- Mathematiques et Informatique,\\ 
	22 BP. 1214, Abidjan 22, C\^{o}te d'Ivoire}
\email{kangnikinvi@yahoo.fr}


\keywords{Haar measure; Spherical Fourier transforms; $\SU(1,1)$-quasi boson; Universal envoloping algebra; Hopf Structure; Spin particle }

%
\begin{abstract}
In this study we investigate the nexus between the $\Spin (\frac12)$ and the $\SU(1,1)$-quasi boson Lie structure and reveal related properties as well as some decomposition of spin particles. We show that the $\SU(1,1)$-quasi boson has a left invariant Haar measure and we ascertain its spherical Fourier transformation. We finally show that this spherical Fourier transformation of type delta is a Kangni-type transform when the Planck's constant, $\hbar=1$.\\

Dans cette recherche, nous explorons le lien entre la structure de Lie des quasi-bosons $\SU(1,1)$ et $\Spin (\frac12)$, mettant en lumière certaines propriétés associées ainsi que la décomposition de particules de spin. Nous démontrons que le quasi-boson $\SU(1,1)$ possède une mesure de Haar invariante à gauche et nous déterminons sa transformation sphérique de Fourier. Nous démontrons finalement que cette transformation de Fourier sphérique type delta est une transformation de type Kangni lorsque la constante de Planck, $\hbar=1$.
\end{abstract}

\maketitle

\section{Introduction}
The $\SL(2,\BR)$ and $\SU(1,1)$ \cite{ui1970clebsch} Lie groups are two elementary groups which are very important in mathematics and have several applications in Physics. In elementary particle physics these groups arise many uniques fields;\\ Schwinger's  realization of $\su(1,1)$ Lie algebra  with creation and annihilation operators \cite{schwinger1965quantum} was defined with spatial reference  in  the Pauli matrix representation. 
Elementary spin particles have Lie structure which are parastatistics elements with some kind of Hopf alagbras. These Lie algebras have its corresponding Lie groups which are specifically Spin Lie groups, that is, Fermion Spin Lie group and Boson Spin Lie group \cite{hounkonnou2022group}. A recent study in \cite{hounkonnou2022group} by Hounkonnou, Howard and Kangni, showed that these Spin Lie groups arise from Clifford algebras and they are connected and semisimple. They further showed that any Spin Lie group, $\mathrm{G}$ can be decomposed into  
$$
\mathrm{G}= \mbox{\CYRZH} K D^{s} N
$$ 
where $K$ is compact, $D^{s}$ is a rotational function ($d$-function), and $N$ is nilpotent (Ladder operators) and  \CYRZHDSC $(\alpha^{-1})$ denote the fine structure constant and all other translational energy of elementary spin particles. This decomposition reduces to the Iwasawa decomposition when the fine structure constant  \CYRZHDSC=1, and the $d-$ function is $D^{\frac{1}{2}}$.
Several authors including Drinfeld\cite{drinfeld1989almost, kulish1981quantum, jimbo10q, woronowicz1991unbounded} have investigated into the the quantum universal envolping algebra $\sl_{q}(2,\BR)$. These algebras have unique Hopf algebraic structures which reveal more interesting properties about it. 
Motivated by all of the above mentioned work, we prove, in this paper,  that the $\Spin\left(\frac12\right)$ Lie group is isomorphic to the $\SU(1,1)$-quasi boson\cite{schwinger1965quantum} and we look at some universal enveloping algebra \cite{drinfeld1989almost, jimbo10q, kulish1981quantum, woronowicz1991unbounded} of the spin half and then consider some general application to spherical Fourier transfromations of type delta\cite{kangni2001transformation, kangni1996transformation}.

The  paper is organized as follows. In section \ref{prelim}, we recall main  definitions  and known results  useful in the sequel, and set the notation. 
Section \ref{Univ} deals with the Universal enveloping algebra $\mathcal{U}\left(\spin\left(\frac12\right)\right)$. In Section~\ref{boson}, we look at isomorphism from $\Spin (\frac12)$ to $\SU(1,1)-$quasi boson. We look at some applications by constructing the Haar measure for a quasi-boson. Finally, we end with some application of Kangni-type spherical Fourier transform of the type delta to spin particles in section \ref{SFT} and with some concluding remarks in Section~\ref{rem}.

\section{Preliminaries}\label{prelim}
The group $\SU(1,1)$ is the group of two-dimensional unitary unimodular matrices which leave the form $|x_{1}|^{2}-|x_{2}|^{2}$ invariant \cite{louck1981angular, biedenharn1965unitary}. Now for a fixed choice of the phase, a $2\times2$ matrix representation ($d$-function) of $\exp(-it K_{y})$ will be: 
\begin{align}\label{rota}
	\exp(-it K_{y})
	&=\exp\left[-i\dfrac{t}{2}
	\begin{pmatrix}
		0 & i\\
		{i} & {0}
	\end{pmatrix}\right]=\exp\left[\dfrac{t}{2}
	\begin{pmatrix} 
		0 & 1\\
		{1} & {0}
	\end{pmatrix}\right]\\
	&=\exp\left(\dfrac{t}{2}\sigma_{x}\right)
	=\begin{pmatrix}
		\cosh\frac{t}{2} & \sinh\frac{t}{2}\\[0.5ex]
		\sinh\frac{t}{2} & \cosh\frac{t}{2}
	\end{pmatrix}=d^{\frac12}_{t}.
\end{align}
Let the Lie group $\mathrm{G}$ operate on the multiplicative group $U$ of the complex numbers with modulo 1:
\begin{gather}
	\zeta\longrightarrow g\cdot \zeta, \quad or \quad g\cdot \zeta =\frac{\alpha\zeta+\beta}{\bar{\beta}\zeta+\bar{\alpha}}, \quad if \quad g=\begin{pmatrix}
		\frac{\alpha}{\beta}&\frac{\beta}{\alpha}
	\end{pmatrix};
\end{gather}
if we define the Haar measure of $U$, by the formula
\begin{gather*}
	d\zeta=\frac{1}{2\pi}d\theta,\quad for \quad \zeta=\exp(i\theta), \quad 0\leq\theta<2\pi,
\end{gather*}
and if we put
\begin{align*}
	\exp(t(g,\zeta))=|\bar{\beta}\zeta+\bar{\alpha}|^2, \quad for \quad \zeta\in U, \quad and \quad g=\begin{pmatrix}
		\frac{\alpha}{\beta}&\frac{\beta}{\alpha}
	\end{pmatrix}\in G,
\end{align*}
we have, the following formula:
\begin{align}\label{equ3.4}
	d(g.\zeta)=\exp(-t(g,\zeta))d\zeta.
\end{align}
Also put, for $g\in G $ and $\zeta\in U$,
\begin{gather*}
	u(g,\zeta)=\dfrac{(\bar{\beta} \zeta+\bar{\alpha})}{|\bar{\beta} \zeta+\bar{\alpha}|}, \quad if \quad g=\begin{pmatrix}
		\frac{\alpha}{\beta}&\frac{\beta}{\alpha}
	\end{pmatrix}\in G;
\end{gather*}
then we get, for $g, g'\in G$,
\begin{gather}\label{equ3.6}
	u(g g^{\prime},\zeta)=u(g,g^{\prime}\cdot\zeta)u(g^{\prime},\zeta).
\end{gather}
Let $\mathcal{H}=L^2(U, d\zeta)$ be the Hilbert space of functions $\varphi(\zeta)$ defined on $U$, of square integrable measure $d\zeta$, with scalar product
\begin{gather*}\label{equ3.7}
	(\varphi,\psi)=\int_U\varphi(\zeta)\overline{\psi(\zeta)}d\zeta.
\end{gather*}
Let $j=0, \frac{1}{2}$, and $s\in\mathbb{C}$; we define, for $g\in G$, the operator $V_g^{j,s}$ by the following formula \cite{takahashi1961fonctions}:
\begin{gather}\label{equ2.10}
	(V_g^{j,s}\varphi)(\zeta)=\exp(-st(g^{-1},\zeta))(u(g^{-1},\zeta))^{2j}\varphi(g^{-1}\cdot\zeta);
	\\ \nonumber
	=\mid \bar{\beta}\zeta + \bar{\alpha}\mid^{-2s} \left( \dfrac{\bar{\beta} \zeta+\bar{\alpha}}{|\bar{\beta} \zeta+\bar{\alpha}|}\right)^{2j}\varphi\left( \dfrac{\alpha \zeta+\beta}{\bar{\beta} \zeta+\bar{\alpha}}\right)
\end{gather}
for $\varphi\in\mathcal{H}$ it is clear, with the help of formula \eqref{equ3.6}, to verify that we have 
\begin{equation}
	V_{g g^{\prime}}^{j,s}=V_g^{j,s}V_{g^{\prime}}^{j,s}, \quad for \quad g,g^{\prime} \in  \mathrm{G};
\end{equation}
moreover, the formula \eqref{equ3.4} shows that the operator $V_g^{j,s}$ is a unitary if $\Re(s)=\frac{1}{2}$. For any integer $p$, put
\begin{align}\label{equ3.10}
	\varphi_p(\zeta)=\zeta^{-p}.	
\end{align}
It is obvious that the functions $\varphi_p$, $p\in\BZ$, form an orthogonal base of $\mathcal{H}$, and that we obtain
\begin{gather}\label{equ2.13}
	V^{j,s}_{u_\theta}\varphi_p=\chi_{p+j}(u_\theta)\varphi_p,
\end{gather}
that is, the function $\varphi_p(\zeta)$ has weight $p+j$ (with respect to the Cartan subgroup $K$).

We denote by $\mathcal{H}_0$ the subspace of $\mathcal{H}$ formed by the functions $\varphi$ such that the mapping $g\longrightarrow V^{j,s}_g\varphi$ is analytic. It is obvious that all $\varphi_p \in \mathcal{H}_0$. The formula
\begin{gather}\label{equ2.14}
	V^{j,s}_\alpha\varphi=\int_{\mathrm{G}}V^{j,s}_{g}\varphi  d\alpha(g), \quad for \quad \varphi\in\mathcal{H}_0, \quad \alpha\in\mathcal{U},
\end{gather}
defines a representation of the univeral envelopping algebra $\mathcal{U}$ in $\mathcal{H}_0$, particularly, if $S\in \g$, we have
\begin{align}\label{equ3.12}
	V^{j,s}_{S}\varphi(p)=\underset{t\longrightarrow 0}{\lim}\frac{1}{t}(V^{j,s}_{\exp(tS)}\varphi-\varphi) \quad for \quad \varphi\in\mathcal{H}_0.
\end{align}


\subsection{Fourier Transform of Type Delta}
Let $j\in\frac{\mathbb{Z}}{2}$ be a quantum state of a fermion particle, and let the function $\chi_{j}(u_\theta)=e^{ij\theta}$
be the character of the compact abelian group $K=\left\lbrace u_{\theta}|0\leq\theta<4\pi\right\rbrace $. Let $L(G)$ be the subspace of complex-valued continuous functions with compact support satisfying:
\begin{equation*}
	f(u_\theta gu_\phi)=\chi_j(u_\theta) f(g)\chi_j(u_\phi)
\end{equation*}
for an element $g$ in $\SU(1,1)$-quasi boson and  $u_\theta, u_\phi\in K$ be denoted by $A_j$. We have for $f\in A_{j}$ satisfying:
$$f(u_\theta d^{\frac{1}{2}}_{t}u_\phi)=\chi_{j}(u_{\theta+\phi})f(d^{\frac{1}{2}}_{t}), \quad \mbox{if} \quad g=u_\theta d^{\frac{1}{2}}_{t}u_\phi,$$
such that $d^{-\frac{1}{2}}_{t}=u_\pi d^{\frac{1}{2}}_{t}u_{-\pi}$, we have $f(d^{-\frac{1}{2}}_{t})=f(d^{\frac{1}{2}}_{t})$ for $t\in \BR$. We can then 
consider $f(d^{\frac{1}{2}}_{t})$ as a function of $\cosh t$:
\begin{equation*}
	f[x]=f(d^{\frac{1}{2}}_{t})=f[\cosh t]
\end{equation*}
is a continuous function with compact support defined in $1
\leq x < +\infty$.
For $s\in\mathbb{C}$ one can put
$$	\alpha_{n,s}(g)=\overline{\chi_{j}(u_{\theta})}\exp({-st}) 
\quad \mbox{if} \quad g=u_\theta a_t\chi_\xi$$ is the unique decomposition of $g$;
\begin{gather}
	\zeta_{n,s}(g)=(\alpha_{n,s})^0(g)=\int_k\alpha_{n,s}(k^{-1}g k)dk.
\end{gather}
Let $K$ be a compact subgroup of $\mathrm{G}$, and denote by $\hat{K}$ the collection of all equivalence classes of irreducible unitary representations of $K$.
For every class $\delta$ of $\hat{K}$, we denote $\xi_\delta$ as the character of $\delta$, $d(\delta)$ as the degree of $\delta$, and define $\chi_{\delta} = d(\delta)\xi_{\delta}$. If $\delta$ represents the class of contragredient representations of $\delta \in \hat{K}$, then $\chi_{\delta} = \chi_{\hat{\delta}}$. Utilizing the Schur orthogonality relations, we can verify that


$\chi_{\hat{\delta}} * \chi_{\hat{\delta}} = \chi_{\hat{\delta}}$. For all function $f \in K(\mathrm{G})$, the algebra of continuous functions with compact support, we set
\begin{equation*}
	_\delta f(x)=\hat{\chi_{\delta}}*f(x)= \int_K \chi_\delta(k)f(kx)dk
\end{equation*}

\begin{equation*}
	f_{\delta}(x)=f(x)*\chi_{\delta}=\int_{K}\chi_{\delta}(k^{-1})f(xk)dk
\end{equation*}

where $dk$ is a normalized Haar measure on $K$.  $$K_{\delta}(\mathrm{G})=\{f\in K(\mathrm{G}),f=_\delta f=f_{\hat{\delta}} \}$$ and 
$K_{\delta(G)}$ is the subalgebra of $K(G)$, and the mapping $\chi_{\hat{\delta}} *f* \chi_{\hat{\delta}}$ is a projection of $K(G)$ onto $K_{\delta}(G)$.
Consider a Banach representation $U$ of $G$ on a Banach space $E$ \cite{n2020some ,warner2012harmonic}. Put $P(\delta)=U(\tilde{\chi_\delta})$ and $E(\delta)=P(\delta)E$, $E(\delta)$ the closed subspace of $E$ consisting of those vectors in $E$ which transform under $K$ according to $\delta$. 

\subsection{Kangni-Type Transform [KTT]}
Let $E$ be a finite dimensional complex vector space. A spherical function $\phi$ of type $\delta$ is a quasi-bounded continuous function on $G$ with values in $End_{\BC}(E)$ such that:
\begin{enumerate}
	\item[i]$\phi(kxk^{-1})=\phi(x)$
	\item[ii]$\chi_{\delta}* \phi=\phi=\phi* \chi_{\delta}$
	\item[iii]The mapping $u_{\phi}:f\rightarrow \phi(f)=\int_{G}f(x)\phi(x^{-1})dx$
\end{enumerate}
is an irreducible representation algebra of $\mathcal{K}_{\delta}^{\sharp}(G)$ \cite{kangni1996transformation, kangni2001transformation, n2020some}. The dimension of $E$ is the height of $\phi$.
If $\phi$ is a quasi-bounded continnous function on $G$ with values in $End_{\BC}(E)$ such that $\phi_{K}=\phi \quad and \quad \chi_{\delta}*\phi=\phi$.
Then the function $\phi$ is spherical function of type $\delta$ if and only if  $$\int_{K}\phi(kxk^{-1}y)dk=\phi(x)\phi(y), \quad for \quad all \quad x,y\in G.$$
Let $\delta\in \bar{K}$ and $\mu_{\delta}\in\delta$ be a unitary irreducible representation of $K$ onto the Hilbert space $E_{\delta}$. For every $f\in K_{\delta}^{\sharp}(G)$. Consider the integral defined by
\begin{gather*}
	F_{f}^{\delta}(h)=h^{\rho}\int_{K}\int_{N}f(khn)\mu_{\delta}(k^{-1})d_{N}(n)dk, \thickspace h\in A.
\end{gather*} 
We shall call the map $f\rightarrow F_{f}^{\delta}$ the Abel transformation of type $\delta$ on $G$. $K_{\delta}^{\sharp}$ is isomorphic to $U_{c,\delta}(G)$ under the map $f\rightarrow \psi_{f}^{\delta}$ defined by $\psi_{f}^{\delta}(x)=\int_{K}\mu_{\delta}(k^{-1})f(kx)dk$. Then, for every $f\in K_{\delta}^{\sharp}$, we have
\begin{gather*}
	F_{f}^{\delta}(h)=h^{\rho}\int_{K}\psi_{f}^{\delta}(hn)d_{N}(n), h\in A.
\end{gather*}
The Abel transformation is linear and one-to-one mapping of the algebra $f\in K_{\delta}^{\sharp}(G)$ onto $f\in K_{\delta}^{\sharp}(A)$. 
Let $G$ be a locally compact unimodular countable at infinity. Let $K$ be a large compact subgroup of $G$. The complexification of the Lie algebras of $G$ and $K$ are 
\begin{align*}
	g=g_{0}+ig_{0}\\
	k=k_{0}+ik_{0}
\end{align*}
Let $\mathcal{A}$ be a universal enveloping algebra of $g_{0}$, and $\mathcal{C}$ the centraliszer of $k_{0}$ in $\mathcal{A}$.

\begin{thm}\cite{kangni2001transformation}
	Let $E$ be a vector space with finite dimension on $\BC$, $\phi$ a quasi-bounded function and $K$ central class of $C^{\infty}$ function. Assume there exist an irreducible representation $u_{\phi}$ of $C$ in $E$ such that:
	$$D\phi=\phi u_{\phi}(D) \quad \mbox{where}\quad u_{\phi}(D)=D\phi(1)$$
	for all $D\in \mathcal{C}$.
	Thus, there exist $\delta\in \hat{K}$ such that $\phi$ is spherical of type delta.
\end{thm}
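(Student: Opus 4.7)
The plan is to verify the three defining conditions of a spherical function of type $\delta$, with the class $\delta\in\hat{K}$ to be produced from the representation $u_{\phi}$ along the way. Condition (i), $\phi(kxk^{-1})=\phi(x)$, is immediate from the assumption that $\phi$ is a $K$-central class $C^{\infty}$ function.

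For condition (ii), I would first observe that $Z(U(k_{0}))\subset Z(\mathcal{C})$: any element of the center of the enveloping algebra of $k_{0}$ commutes with all of $k_{0}$, hence lies in $\mathcal{C}$, and in fact commutes with every element of $\mathcal{C}$. Applying the hypothesis $D\phi=\phi\,u_{\phi}(D)$ to such a central $D$, the endomorphism $u_{\phi}(D)\in\mathrm{End}_{\BC}(E)$ commutes with the whole irreducible image $u_{\phi}(\mathcal{C})$, so by Schur's lemma on the finite-dimensional space $E$ it must be a scalar. On the other hand, Casimir-type elements of $Z(U(k_{0}))$ act on the $K$-isotypic component of a smooth function by a known eigenvalue attached to each $\delta\in\hat{K}$; matching this family of scalars to the single eigenvalue produced by Schur isolates a unique $\delta$, and the $K$-isotypic projector acts as the identity on $\phi$, giving $\chi_{\delta}*\phi=\phi=\phi*\chi_{\delta}$.

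For condition (iii), I would promote $u_{\phi}$ from $\mathcal{C}$ to a representation of the convolution algebra $\mathcal{K}_{\delta}^{\sharp}(G)$ through the candidate map $\phi(f)=\int_{G}f(x)\phi(x^{-1})dx$. A standard integration-by-parts using $D\phi=\phi\,u_{\phi}(D)$, together with the $K$-biinvariance of $f\in\mathcal{K}_{\delta}^{\sharp}(G)$ and condition (i), upgrades this assignment to a well-defined algebra homomorphism extending $u_{\phi}$. Irreducibility then transfers: any $u_{\phi}(\mathcal{K}_{\delta}^{\sharp}(G))$-invariant subspace of $E$ is in particular $u_{\phi}(\mathcal{C})$-invariant by the differential relation, hence trivial by assumption. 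The normalisation $u_{\phi}(D)=D\phi(1)$ guarantees that the extension agrees with the given representation on $\mathcal{C}$.

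The main obstacle is the identification step inside (ii): showing that $Z(U(k_{0}))$ is rich enough to separate $K$-types when paired with quasi-bounded smooth functions on $G$, so that the single scalar given by Schur really singles out one $\delta\in\hat{K}$ rather than a finite family. A secondary technical point is to verify that quasi-boundedness of $\phi$ is preserved under $\chi_{\delta}$-projection and that the projected function continues to satisfy the original differential equation, so that the normalisation at the identity carries through consistently.
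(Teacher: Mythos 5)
The paper never proves this statement: it is recalled verbatim from \cite{kangni2001transformation} in the preliminaries, with no argument supplied, so there is no in-paper proof to compare yours against. Judged on its own, your outline follows the natural Godement-type strategy (verify (i), (ii), (iii) of the definition directly), but the two places you yourself flag as ``obstacles'' are in fact the entire content of the theorem, and as written they do not go through. For (ii), the argument via $Z(U(k_{0}))$ has a circularity: to match the Schur scalar of $u_{\phi}(D)$ against the eigenvalue of $D$ on a $K$-isotypic component, you must already know that $\phi$ lives in finitely many $K$-types --- equivalently that $\chi_{\delta}*\phi=\phi$ for some finite set of $\delta$'s --- and that is precisely the conclusion you are trying to reach. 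Quasi-boundedness alone does not give a $K$-type decomposition of $\phi$ that converges in a sense compatible with applying $D$; the standard route is to first use the hypothesis $D\phi=\phi\,u_{\phi}(D)$ for an \emph{elliptic} $D\in\mathcal{C}$ to conclude that $\phi$ is analytic, and only then control its $K$-expansion. Moreover $Z(U(k_{0}))$ separates classes in $\hat{K}$ only when $K$ is connected, so even granting the decomposition, the ``unique $\delta$'' step needs the full $K$-central hypothesis, not just central eigenvalues.

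For (iii), the assertion that ``a standard integration by parts upgrades $f\mapsto\phi(f)$ to an algebra homomorphism extending $u_{\phi}$'' is the heart of the matter and is not a routine computation: one must show $\phi(f*g)=\phi(g)\phi(f)$ (equivalently the functional equation $\int_{K}\phi(kxk^{-1}y)\,dk=\phi(x)\phi(y)$ stated in the paper), and this requires connecting the infinitesimal data $D\phi(1)$ to the global convolution structure, typically via analyticity of $\phi$ and a density argument identifying $u_{\phi}(\mathcal{C})$ inside the closure of the image of $\mathcal{K}_{\delta}^{\sharp}(G)$ through differentiated approximate identities. Your irreducibility transfer is stated in the right logical direction, but it silently assumes exactly this containment. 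So the proposal is a reasonable roadmap, not a proof; to complete it you would need (a) the ellipticity/analyticity step, (b) a convergence argument for the $K$-type expansion of $\phi$, and (c) an actual derivation of the functional equation, all of which are carried out in \cite{kangni2001transformation} but absent here.
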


\begin{thm}\cite{kangni1996transformation}
	Let $\mu$ be a linear form on $\mathcal{A}$. The mapping $f\longmapsto \phi_{\delta}^{\mu}(f)$ of $K_{\delta}^{\sharp}(G)$ with value in $M_{d(\delta)}(\BC)$ defined by:
	$$\phi_{\delta}^{\mu}=\int_{A}F^{\delta}_{f}(h)h^{\mu+\rho}dh$$
	is a spherical Fourier transformation of the type delta.
\end{thm}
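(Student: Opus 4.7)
The plan is to identify $\phi_{\delta}^{\mu}$ as arising from a spherical function of type $\delta$ through the pairing $\phi \mapsto \int_{G} f(x)\phi(x^{-1})dx$, and then to verify the three defining properties (i)--(iii) recalled in the subsection on Kangni-type transforms. First I would substitute the explicit formula for $F_{f}^{\delta}(h)$ into the integral defining $\phi_{\delta}^{\mu}(f)$, interchange integrations by Fubini, and rewrite the resulting iterated integral over $K\times A\times N$ as a single integral over $G$ using the Iwasawa decomposition $g=khn$ with Haar measure $dg = h^{2\rho}\,dk\,dh\,d_{N}(n)$. This should yield a representation
$$\phi_{\delta}^{\mu}(f) = \int_{G} f(g)\,\Phi_{\delta}^{\mu}(g^{-1})\,dg,$$
where $\Phi_{\delta}^{\mu}$ is an explicit $M_{d(\delta)}(\BC)$-valued function on $G$ assembled from $\mu_{\delta}$ and the character $h\mapsto h^{\mu}$ of $A$. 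This reduces the theorem to showing that $\Phi_{\delta}^{\mu}$ is spherical of type $\delta$.

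Next I would check (i) and (ii). The central $K$-invariance $\Phi_{\delta}^{\mu}(kxk^{-1})=\Phi_{\delta}^{\mu}(x)$ follows from the bi-$K$-behavior built into the Abel transform together with a change of variable $k\mapsto k_{0}^{-1}kk_{0}$ in the $K$-integral. The convolution identity $\chi_{\delta}*\Phi_{\delta}^{\mu}=\Phi_{\delta}^{\mu}=\Phi_{\delta}^{\mu}*\chi_{\delta}$ is then a direct consequence of the Schur orthogonality relations recalled in the preliminaries, since $\Phi_{\delta}^{\mu}$ is built from matrix coefficients of $\mu_{\delta}\in\delta$. The harder condition (iii) asserts that the map $u_{\Phi_{\delta}^{\mu}}:f\mapsto \phi_{\delta}^{\mu}(f)$ is an irreducible representation of the algebra $K_{\delta}^{\sharp}(G)$ on $M_{d(\delta)}(\BC)$. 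For this I would use the key fact, recalled just before the theorem, that the Abel transformation $f\mapsto F_{f}^{\delta}$ is an algebra isomorphism of $K_{\delta}^{\sharp}(G)$ onto $K_{\delta}^{\sharp}(A)$; under this isomorphism $\phi_{\delta}^{\mu}$ factors as Abel followed by the character-type map $F\mapsto \int_{A}F(h)\,h^{\mu+\rho}\,dh$. This second map is the Mellin--Fourier transform at $\mu+\rho$ on the abelian group $A$, and is multiplicative, so $u_{\Phi_{\delta}^{\mu}}$ inherits an algebra-homomorphism structure whose irreducibility on $E_{\delta}\otimes E_{\delta}^{*}\simeq M_{d(\delta)}(\BC)$ is forced by the irreducibility of $\mu_{\delta}$ on $E_{\delta}$.

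The main obstacle is the careful bookkeeping required in the previous step: tracking the modular factor $h^{\rho}$ correctly under the Iwasawa change of variables, and passing from scalar multiplicativity on $A$ to matrix irreducibility on $M_{d(\delta)}(\BC)$ without losing the $K$-intertwining property of $\mu_{\delta}$. Once (i)--(iii) are established, I would invoke Theorem 2.1: the eigenoperator identity $D\Phi_{\delta}^{\mu}=\Phi_{\delta}^{\mu}\,u_{\Phi_{\delta}^{\mu}}(D)$ for $D\in\mathcal{C}$ is obtained by differentiating the Iwasawa-integral form of $\Phi_{\delta}^{\mu}$ under the integral sign, so $\Phi_{\delta}^{\mu}$ is spherical of type $\delta$. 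Consequently the mapping $f\mapsto \phi_{\delta}^{\mu}(f)$ is a spherical Fourier transformation of type $\delta$, as required.
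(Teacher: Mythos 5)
First, a point of comparison: the paper does not prove this statement at all --- it is imported verbatim from \cite{kangni1996transformation} as a background result, so there is no in-paper proof to measure yours against. Judged on its own, your overall strategy (unfold $F_{f}^{\delta}$, pass to a single integral over $G$ via the Iwasawa decomposition, and exhibit $\phi_{\delta}^{\mu}(f)$ as the pairing of $f$ with a spherical function of type $\delta$) is the right one and is essentially how the cited source proceeds. But there is a genuine gap in your verification of property (i). The kernel you obtain from Fubini and the change of variables $dg=h^{2\rho}\,dk\,dh\,d_{N}(n)$ has the form $g\mapsto \mu_{\delta}(\kappa(g))^{-1}e^{\mu(H(g))}$, where $g=\kappa(g)e^{H(g)}n$ is the Iwasawa decomposition. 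This function is \emph{not} $K$-central: $\kappa(k_{0}gk_{0}^{-1})=k_{0}\kappa(gk_{0}^{-1})$, so conjugation by $k_{0}$ produces an extra factor $\mu_{\delta}(k_{0}^{-1})$ that no change of variable removes. The representation $\phi_{\delta}^{\mu}(f)=\int_{G}f(g)\Phi(g^{-1})\,dg$ with a genuinely spherical $\Phi$ only emerges after you exploit the hypothesis $f\in K_{\delta}^{\sharp}(G)$ (namely $f(kxk^{-1})=f(x)$ and $\chi_{\delta}\ast f=f=f\ast\chi_{\delta}$) to replace the raw kernel by its $K$-average $\int_{K}\mu_{\delta}(\kappa(k^{-1}gk))^{-1}e^{\mu(H(k^{-1}gk))}\,dk$; it is this averaged, Harish-Chandra--type function that satisfies (i) and (ii), not the kernel you wrote down.

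Your treatment of (iii) also leans on two facts that are not available as stated. The paper records only that the Abel transformation is linear and one-to-one onto $K_{\delta}^{\sharp}(A)$; the property you actually need, $F_{f\ast g}^{\delta}=F_{f}^{\delta}\ast F_{g}^{\delta}$ as $\mathrm{End}_{\BC}(E_{\delta})$-valued convolution on $A$, is the crux of the multiplicativity of $f\mapsto\phi_{\delta}^{\mu}(f)$ and requires its own computation using the $N$-integration and the cocycle identity for $H(\cdot)$. Moreover, irreducibility of the resulting representation on $M_{d(\delta)}(\BC)$ is not ``forced by the irreducibility of $\mu_{\delta}$'': by Burnside's theorem you must show that the image of $K_{\delta}^{\sharp}(G)$ is all of $M_{d(\delta)}(\BC)$, which uses surjectivity of the Abel transform together with the density of the Mellin transforms at $\mu+\rho$. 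Finally, your concluding appeal to the eigenoperator characterization theorem is redundant with the direct verification of (i)--(iii); you should commit to one route or the other.
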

from the above results we have the following:
\begin{defn}\label{KTT}
	Any spherical Fourier transformation of the type delta, $\phi_{\delta}^{\mu}$ such that the mapping $f\longmapsto \phi_{\delta}^{\mu}(f)$ of $K_{\delta}^{\sharp}(G)$ with value in $M_{d(\delta)}(\BC)$ defined by:
	$$\phi_{\delta}^{\mu}=\int_{A}F^{\delta}_{f}(h)h^{\mu+\rho}dh$$
	where $\mu$ is a linear form on $\mathcal{A}$ is a Kangni-type transform.	
\end{defn}

\section{Universal Enveloping Algebra $\mathcal{U}\left(\spin\left(\frac12\right)\right)$}\label{Univ}
In this subsection we shall define the quantum $\mathcal{U}(\spin\left(\frac12\right))$ by generators and relations and then define $\Delta_{\spin\left(\frac12\right)},\epsilon_{\spin\left(\frac12\right)}, \mathbb{S}_{\spin\left(\frac12\right)}$ on these generators.

\begin{thm}\label{3.1}
	The fermionic spin Lie algebra $\spin\left(\frac12\right)$ (bosonic spin Lie algebra ($\spin 1$))
	endowed with
	\begin{enumerate}
		\item [(i)] a coprodruct $\Delta_{\spin\left(\frac12\right)}(\Delta_{\spin(1)})$, that is, a homomorphism
		\begin{align*}
			\Delta_{\spin\left(\frac12\right)} : \mathcal{U}\left(\spin\left(\frac12\right)\right)\longmapsto \mathcal{U}\left(\spin\left(\frac12\right)\right) \otimes \mathcal{U}\left(\spin\left(\frac12\right)\right)
		\end{align*}
		\begin{align*}
			\Delta_{\spin(1)} : \mathcal{U}\left(\spin(1)\right)\longmapsto \mathcal{U}\left(\spin(1)\right) \otimes \mathcal{U}\left(\spin(1)\right)
		\end{align*}
		
		\item [(ii)] a counit $\epsilon_{\spin\left(\frac12\right)}(\epsilon_{\spin(1)})$, that is, a homomorphism
		$$\epsilon_{\spin\left(\frac12\right)} : \mathcal{U}\left(\spin\left(\frac12\right)\right)\longmapsto K_{q} $$
		
		$$\epsilon_{\spin(1)} : \mathcal{U}\left(\spin(1)\right)\longmapsto K_{q} $$
		
		\item [(iii)] an antipode $\mathbb{S}_{\spin\left(\frac12\right)}(\mathbb{S}_{\spin(1)})$, that is, antihomomorphism(graded)
		$$ \mathbb{S}_{\spin\left(\frac12\right)} : \mathcal{U}\left(\spin\left(\frac12\right)\right)\longmapsto \mathcal{U}\left(\spin\left(\frac12\right)\right)$$
		
		$$ \mathbb{S}_{\spin(1)} : \mathcal{U}\left(\spin(1)\right)\longmapsto \mathcal{U}\left(\spin(1)\right)$$
	\end{enumerate}
	defined on the generators of the $\spin\left(\frac12\right) (\spin (1))$ is given by the relations:
	\begin{align*}
		\Delta_{\spin\left(\frac12\right)}(S_{+})&=S_{+}\otimes1+K\otimes S_{+}, \quad
		\Delta_{\spin\left(\frac12\right)}(S_{-})=S_{-}\otimes K^{-1}+1\otimes S_{-},\\ \nonumber
		\Delta_{\spin\left(\frac12\right)}(K)&=K\otimes K,\\
		\nonumber
		\epsilon_{\spin\left(\frac12\right)}(S_{+})&=\epsilon_{\spin\left(\frac12\right)}(S_{-})=0,\quad \epsilon_{\spin\left(\frac12\right)}(K)=1,\\
		\nonumber
		\mathbb{S}_{\spin\left(\frac12\right)}(S_{+})&=-K^{-1}S_{+}, \quad	\mathbb{S}_{\spin\left(\frac12\right)}(S_{-})=-S_{-}K, \quad	\mathbb{S}_{\spin\left(\frac12\right)}(K)=K^{-1}.
	\end{align*}
\end{thm}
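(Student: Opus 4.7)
The plan is to verify that the triple $(\Delta_{\spin(\frac12)}, \epsilon_{\spin(\frac12)}, \mathbb{S}_{\spin(\frac12)})$ endows $\mathcal{U}(\spin(\frac12))$ with a Hopf algebra structure by checking the standard axioms on generators; the bosonic case $\spin(1)$ is handled by the same template applied to its three-generator presentation, so I focus on $\spin(\frac12)$. The overall strategy is the familiar one used for $U_q(\mathfrak{sl}_2)$: show that the prescriptions on generators respect the defining relations (so that the maps extend uniquely to the whole algebra), then verify coassociativity, the counit axiom, and the antipode axiom one generator at a time.

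First I would fix the defining relations $KK^{-1}=K^{-1}K=1$, $KS_{\pm}K^{-1}=q^{\pm 2}S_\pm$, and $[S_+,S_-]=(K-K^{-1})/(q-q^{-1})$, and check that $\Delta_{\spin(\frac12)}$, $\epsilon_{\spin(\frac12)}$, $\mathbb{S}_{\spin(\frac12)}$ are compatible with each. For $\Delta$, the grouplike identity $\Delta(K)=K\otimes K$ makes the $K$-relations automatic once one verifies
\[
\Delta(K)\Delta(S_\pm)\Delta(K)^{-1}=q^{\pm 2}\Delta(S_\pm),
\]
which is a direct $q$-commutation computation in $\mathcal{U}(\spin(\frac12))^{\otimes 2}$. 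The key compatibility is
\[
\bigl[S_+\otimes 1+K\otimes S_+,\ S_-\otimes K^{-1}+1\otimes S_-\bigr]=\frac{K\otimes K-K^{-1}\otimes K^{-1}}{q-q^{-1}},
\]
which reduces to showing that the two mixed cross-terms cancel after pushing $K$ past $S_\pm$ with the factor $q^{\pm 2}$, while the two diagonal commutators telescope. For $\epsilon$ compatibility is immediate because $\epsilon(S_\pm)=0$ kills the commutator side; for $\mathbb{S}$ one reverses order and uses $\mathbb{S}(K)=K^{-1}$ to obtain an identity equivalent to the original commutation relation after multiplying by $-K$ on appropriate sides.

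Next I would check the Hopf axioms on generators. Coassociativity on $K$ is automatic (grouplike), while on $S_+$ a straightforward expansion gives
\[
(\Delta\otimes\mathrm{id})\Delta(S_+)=S_+\otimes 1\otimes 1+K\otimes S_+\otimes 1+K\otimes K\otimes S_+,
\]
equal to $(\mathrm{id}\otimes\Delta)\Delta(S_+)$, and similarly for $S_-$. The counit axiom collapses on $S_+$ to $\epsilon(S_+)\cdot 1+\epsilon(K)S_+=S_+$, with analogous triviality on $S_-,K$. For the antipode axiom one computes, e.g. on $S_+$,
\[
m\bigl((\mathbb{S}\otimes\mathrm{id})\Delta(S_+)\bigr)=\mathbb{S}(S_+)+\mathbb{S}(K)S_+=-K^{-1}S_++K^{-1}S_+=0=\eta\epsilon(S_+),
\]
and symmetrically for the $(\mathrm{id}\otimes\mathbb{S})$ side and for $S_-,K$.

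The main obstacle is the well-definedness of $\Delta$ with respect to the quadratic commutator relation, since this is the only step that is neither trivial nor purely symbolic: it requires the $q$-commutation $KS_\pm=q^{\pm 2}S_\pm K$ to orchestrate the cancellation of the cross-terms and the telescoping of the diagonal ones. Once that identity is secured, all remaining verifications reduce to mechanical expansions on $S_+$, $S_-$, and $K$, and the parallel argument for $\mathcal{U}(\spin(1))$ with its $(J_+,J_-,J_0)$-presentation proceeds by the same bookkeeping.
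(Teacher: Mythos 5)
Your proposal is correct, and the key computation checks out: with $KS_{\pm}K^{-1}=q^{\pm2}S_{\pm}$ the two cross commutators $[S_+\otimes 1,\,1\otimes S_-]$ and $[K\otimes S_+,\,S_-\otimes K^{-1}]$ do vanish, the two diagonal ones contribute $\frac{(K-K^{-1})\otimes K^{-1}+K\otimes(K-K^{-1})}{q-q^{-1}}=\frac{K\otimes K-K^{-1}\otimes K^{-1}}{q-q^{-1}}$, and the coassociativity, counit and antipode checks on generators are exactly as you state. However, your route is genuinely different from the paper's. The paper's proof spends essentially all of its effort on the \emph{other} half of the story: it writes $S_{\pm},S_z$ as explicit $\hbar$-dependent $2\times2$ matrices, exponentiates $S_z$ to obtain $K=\exp(\hbar\sigma_z/2)=\mathrm{diag}(q,q^{-1})$ with $q=e^{\hbar/2}$, and thereby arrives at the $q$-deformed presentation $KS_{\pm}K^{-1}=q^{\pm2}S_{\pm}$, $[S_+,S_-]=(K-K^{-1})/(q-q^{-1})$; it then simply \emph{asserts} the Hopf structure on these generators, deferring the verification of well-definedness and of the Hopf axioms to the cited literature on $U_q(\mathfrak{sl}_2)$ (Kulish--Reshetikhin, Jimbo, Woronowicz). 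You do the opposite: you take the $q$-deformed presentation as your starting point and carry out the verification the paper omits. What your approach buys is an actual self-contained proof of the theorem as stated; what the paper's approach supplies, and yours silently presupposes, is the identification of the ``spin'' generators and of $K$ as the exponential of the Cartan element, which is the only step tying the statement to $\spin(\tfrac12)$ rather than to an abstract $U_q(\mathfrak{sl}_2)$. A complete argument needs both halves, so you should prepend the derivation of the presentation (or at least state it as the definition of $\mathcal{U}(\spin(\tfrac12))$, as the paper does) before running your verification.
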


\begin{proof}
	The fermionic spin Lie algebra \cite{hounkonnou2022group}, $\spin(\frac{1}{2})$ can be generated by
	\begin{enumerate}
		\item the elements 
		$${S_{-}}=\hbar
		\begin{pmatrix}
			0 & 0\\
			1 & 0
		\end{pmatrix},\quad 
		S_{z}=\frac{\hbar}{2}
		\begin{pmatrix}
			1 & 0\\
			0 & -1
		\end{pmatrix},\quad 
		S_{+}=\hbar 
		\begin{pmatrix}
			0 & 1\\
			0 & 0
		\end{pmatrix}
		$$
		
		\item the commutation relations are given by:
		\begin{align*}
			[{S_{z}},S_{+}]&=2 \hbar S_{+},\quad 	
			[{S_{z}},S_{-}]=-2\hbar S_{-},\quad \\
			\nonumber
			[{S_{+}},S_{-}]&=\dfrac{\sinh \left(\dfrac{\hbar \sigma_{z}}{2}\right)}{\sinh \left(\dfrac{\hbar}{2}\right)}=\dfrac{\sinh(S_{z})}{\sinh \left(\dfrac{\hbar}{2}\right)}=\dfrac{\exp S_{z}-\exp (-S_{z})}{\exp (\frac{\hbar}{2})-\exp(-\frac{\hbar}{2})}=K.
		\end{align*}
	\end{enumerate} 
	From the subgroups of the Iwasawa decomposition of $\spin(\frac{1}{2})$, we have that the diagonal element 
	\begin{align}
		S_{z}=\frac{\hbar}{2}
		\begin{pmatrix}
			1 & 0\\
			0 & -1
		\end{pmatrix}=\frac{\hbar}{2}\sigma_{z}.
	\end{align}
	This element can be exponentiated by mimicing the Iwasawa decomposition of a particle with the Planck's constant ($\hbar$) as a constant variable to have the following:
	\begin{align*}
		d^{\frac12}_{\hbar}&=
		\exp\left(\dfrac{\hbar\sigma_{z}}{2}\right)
		=\exp\left( \dfrac{\hbar}{2} \bra{s,m}\sigma_{z} \ket{s,m}\right)
		=\sum_{n=0}^{\infty}\dfrac{1}{n!}\left(\dfrac{\hbar\sigma_{z}}{2}\right)^{n}\notag\\
		&=\sum_{n=0}^{\infty}\dfrac{1}{n!}\left(\dfrac{\hbar\sigma_{z}}{2}\right)^{n}
		=
		\begin{pmatrix}
			\exp{(\frac{\hbar}{2})} & 0\\
			0 & \exp{(-\frac{\hbar}{2})}
		\end{pmatrix}\\
		&=\begin{pmatrix}
			q & 0\\
			0 & q^{-1}
		\end{pmatrix}=K.
	\end{align*}
	We define the algebra $\mathcal{U}$($\spin\left(\frac12\right)$) generated by $S_{+},S_{-} , K, K^{-1}$ subject to the following relations \cite{woronowicz1991unbounded, kulish1981quantum, jimbo10q}:
	
	\begin{align}
		KK^{-1}&=K^{-1}K=1, \\  KS_{+}K^{-1}&=q^2S_{+}, \\
		KS_{-}K^{-1}&=q^{-2}S_{-}, \\  S_{+}S_{-}-S_{-}S_{+}&=\frac{K-K^{-1}}{q-q^{-1}}.
	\end{align}
	Note that the algebra $\mathcal{U}(\spin\left(\frac12\right))$ is spanned by the monomials $S_{-}^{r}K^{l}S_{+}^{m}$,\\ where $r, m \in \mathbb{Z}\geqslant 0$, and $l\in \mathbb{Z}$. The quantum spin Lie algebra $\mathcal{U}\left(\spin\left(\frac12\right)\right)$ endowed with
	
	\begin{enumerate}
		\item [(i)] a coprodruct $\Delta_{\spin\left(\frac12\right)}$, that is, a homomorphism
		\begin{align*}
			\Delta_{\spin\left(\frac12\right)} : \mathcal{U}\left(\spin\left(\frac12\right)\right)\longmapsto \mathcal{U}\left(\spin\left(\frac12\right)\right) \otimes \mathcal{U}\left(\spin\left(\frac12\right)\right)
		\end{align*}
		\item [(ii)] a counit $\epsilon_{\spin\left(\frac12\right)}$, that is, a homomorphism
		$$\epsilon_{\spin\left(\frac12\right)} : \mathcal{U}\left(\spin\left(\frac12\right)\right)\longmapsto K_{q} $$
		
		\item [(iii)] an antipode $\mathbb{S}_{\spin\left(\frac12\right)}$, that is, antihomomorphism(graded)
		$$ \mathbb{S}_{\spin\left(\frac12\right)} : \mathcal{U}\left(\spin\left(\frac12\right)\right)\longmapsto \mathcal{U}\left(\spin\left(\frac12\right)\right)$$
	\end{enumerate}
	defined on the generators of the quantum algebra $\mathcal{U}\left(\spin\left(\frac12\right)\right)$ by the relations:
	
	\begin{align}
		\Delta_{\spin\left(\frac12\right)}(S_{+})&=S_{+}\otimes1+K\otimes S_{+}, \quad
		\Delta_{\spin\left(\frac12\right)}(S_{-})=S_{-}\otimes K^{-1}+1\otimes S_{-},\\ \nonumber
		\Delta_{\spin\left(\frac12\right)}(K)&=K\otimes K,\\
		\nonumber
		\epsilon_{\spin\left(\frac12\right)}(S_{+})&=\epsilon_{\spin\left(\frac12\right)}(S_{-})=0,\quad \epsilon_{\spin\left(\frac12\right)}(K)=1,\\
		\nonumber
		\mathbb{S}_{\spin\left(\frac12\right)}(S_{+})&=-K^{-1}S_{+}, \quad	\mathbb{S}_{\spin\left(\frac12\right)}(S_{-})=-S_{-}K, \quad	\mathbb{S}_{\spin\left(\frac12\right)}(K)=K^{-1}.
	\end{align}
\end{proof}

\begin{cor}\label{cor 3.2}
	The universal algebra $\mathcal{U}\left(\spin\left(\frac12\right)\right)$ of a spin particle is isomorphic to the quantum $\mathcal{U}_{q}\sl(2,\BR)$.
\end{cor}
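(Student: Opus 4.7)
The plan is to exhibit an explicit Hopf-algebra isomorphism by matching generators and relations. Recall that the standard presentation of $\mathcal{U}_{q}\sl(2,\BR)$ (with $q\in\BC^{\times}$ not a root of unity) is generated by symbols $E,F,K,K^{-1}$ subject to $KK^{-1}=K^{-1}K=1$, $KEK^{-1}=q^{2}E$, $KFK^{-1}=q^{-2}F$, and $EF-FE=(K-K^{-1})/(q-q^{-1})$, equipped with the Hopf structure $\Delta(E)=E\otimes 1+K\otimes E$, $\Delta(F)=F\otimes K^{-1}+1\otimes F$, $\Delta(K)=K\otimes K$, $\varepsilon(E)=\varepsilon(F)=0$, $\varepsilon(K)=1$, $S(E)=-K^{-1}E$, $S(F)=-FK$, $S(K)=K^{-1}$.

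First I would set $q=\exp(\hbar/2)$, which is precisely the scalar appearing in the diagonal matrix $K$ computed inside the proof of Theorem~\ref{3.1}. Then I would define the candidate map
\begin{equation*}
\Phi:\mathcal{U}_{q}\sl(2,\BR)\longrightarrow\mathcal{U}\!\left(\spin\!\left(\tfrac12\right)\right),\qquad E\mapsto S_{+},\ F\mapsto S_{-},\ K\mapsto K,\ K^{-1}\mapsto K^{-1},
\end{equation*}
and extend it as an algebra homomorphism on the tensor algebra of the generators. The first check is that $\Phi$ descends to the quotient: this amounts to verifying that the images satisfy the four defining relations, which is exactly the content of the four displayed relations at the end of the proof of Theorem~\ref{3.1}, once one identifies $(\exp S_{z}-\exp(-S_{z}))/(\exp(\hbar/2)-\exp(-\hbar/2))$ with $(K-K^{-1})/(q-q^{-1})$.

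Next I would check that $\Phi$ is a morphism of Hopf algebras by comparing the formulas for $\Delta_{\spin(1/2)},\epsilon_{\spin(1/2)},\mathbb{S}_{\spin(1/2)}$ from Theorem~\ref{3.1} with the ones recalled above; this is visibly a term-by-term match under the substitution $E\leftrightarrow S_{+},F\leftrightarrow S_{-}$. Finally, to establish bijectivity, I would invoke the PBW-type basis noted after the defining relations in Theorem~\ref{3.1}: the monomials $S_{-}^{r}K^{l}S_{+}^{m}$ with $r,m\in\BZ_{\geqslant 0}$ and $l\in\BZ$ span $\mathcal{U}(\spin(1/2))$, and the corresponding monomials $F^{r}K^{l}E^{m}$ form the standard PBW basis of $\mathcal{U}_{q}\sl(2,\BR)$; since $\Phi$ sends basis to basis, it is a linear isomorphism.

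The only real point that requires care is the identification $K=\exp(S_{z})$ with the formal generator $K$ of $\mathcal{U}_{q}\sl(2,\BR)$: one must check that no extra relation (beyond the four imposed) is forced by having $K$ arise as an exponential of an element of $\spin(1/2)$, so that the abstract presentation is not strengthened. I expect this to be the main obstacle, and I would address it by treating $K$ as a formal invertible generator at the level of $\mathcal{U}(\spin(1/2))$ (as is already implicit in Theorem~\ref{3.1}, where $K$ and $K^{-1}$ are listed among the generators), so that the PBW count matches and $\Phi$ is forced to be injective as well as surjective.
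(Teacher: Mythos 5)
Your proposal is correct and follows essentially the same route as the paper: the paper's proof is a one-line appeal to Theorem~\ref{3.1}, whose generators, relations, and Hopf structure maps are exactly the standard presentation of $\mathcal{U}_{q}\sl(2,\BR)$ under $E\leftrightarrow S_{+}$, $F\leftrightarrow S_{-}$, $q=\exp(\hbar/2)$. You simply supply the details the paper leaves implicit (well-definedness on the quotient, compatibility with $\Delta$, $\epsilon$, $\mathbb{S}$, and bijectivity via the PBW-type basis of monomials $S_{-}^{r}K^{l}S_{+}^{m}$), together with a sensible caveat about treating $K$ as a formal invertible generator.
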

\begin{proof}\cite{kulish1981quantum, jimbo10q, woronowicz1991unbounded}
	Proof of this follows easily from Theorem \eqref{3.1}
\end{proof}
\begin{lem}
	The Casimir operator $\omega$ of $\mathcal{U}\left(\spin\left(\frac12\right)\right)$  is given by 
	\begin{align*}
		\omega&=\dfrac{q^{-1}K+qK^{-1}}{(q-q^{-1})2}+S_{+}S_{-},
		\\
		\nonumber
		&=\dfrac{qK+q^{-1}K^{-1}}{(q-q^{-1})2}+S_{-}S_{+}.
	\end{align*}
$\omega$ commutes with the element of the $\spin(\frac12)$ Lie algebra \cite{kulish1981quantum, jimbo10q, woronowicz1991unbounded}.
\end{lem}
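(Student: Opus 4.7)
The plan is to prove the lemma in two stages: first, establish the equivalence of the two stated expressions for $\omega$, and then show that $\omega$ commutes with each of the generators $K$, $S_{+}$, $S_{-}$ of $\mathcal{U}(\spin(\tfrac12))$.

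For the equivalence, I would substitute $S_{+}S_{-} = S_{-}S_{+} + \frac{K-K^{-1}}{q-q^{-1}}$ (from the defining relation in Theorem~\ref{3.1}) into the first expression. The claim then reduces to the straightforward identity
\begin{equation*}
q^{-1}K + qK^{-1} + (q-q^{-1})(K-K^{-1}) = qK + q^{-1}K^{-1},
\end{equation*}
which is a one-line check.

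For the commutation with $K$, the relations $KS_{\pm}K^{-1} = q^{\pm 2}S_{\pm}$ immediately give $KS_{+}S_{-}K^{-1} = q^{2}S_{+}\cdot q^{-2}S_{-} = S_{+}S_{-}$, and $K$ trivially commutes with $K^{\pm 1}$, so $[\omega, K]=0$. For the commutation with $S_{+}$, I would separately compute $[q^{-1}K+qK^{-1}, S_{+}]$ using $S_{+}K = q^{-2}KS_{+}$ and $S_{+}K^{-1} = q^{2}K^{-1}S_{+}$, and compute $[S_{+}S_{-}, S_{+}] = S_{+}[S_{-},S_{+}] = -S_{+}\frac{K-K^{-1}}{q-q^{-1}}$, then push $S_{+}$ through $K^{\pm 1}$ in the latter. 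After using the small algebraic identities $q^{-1}-q^{-3} = q^{-2}(q-q^{-1})$ and $q-q^{3} = -q^{2}(q-q^{-1})$, the two contributions are exactly opposite and sum to zero. Commutation with $S_{-}$ is handled by the symmetric argument, using the second form of $\omega$ (which is precisely why the two expressions are useful: each makes one of the commutators transparent).

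The real subtlety — and the one place where care is required — is bookkeeping the exact coefficients $q^{-1}$ and $q$ on $K$ and $K^{-1}$ in the prefactor; a mismatched normalization produces a nonzero residual term in $[\omega, S_{+}]$. The cancellation is what fixes the asymmetric weighting $q^{-1}K + qK^{-1}$ (as opposed to a symmetric $K+K^{-1}$) in the Casimir, and verifying this cancellation is the only nontrivial computation. Everything else is formal manipulation in $\mathcal{U}(\spin(\tfrac12))$ using the four defining relations, and Corollary~\ref{cor 3.2} guarantees consistency with the classical $\mathcal{U}_{q}\sl(2,\BR)$ Casimir.
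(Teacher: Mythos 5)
Your proof is correct, but it is a genuinely different argument from the one in the paper. The paper disposes of the lemma in one line by appealing to Corollary~\ref{cor 3.2} (the identification of $\mathcal{U}(\spin(\tfrac12))$ with $\mathcal{U}_{q}\sl(2,\BR)$) and the cited literature, so the Casimir and its central property are simply transported through that isomorphism rather than verified. You instead give a self-contained computation directly from the four defining relations: the substitution $S_{+}S_{-}=S_{-}S_{+}+\frac{K-K^{-1}}{q-q^{-1}}$ reducing the equality of the two expressions to $q^{-1}K+qK^{-1}+(q-q^{-1})(K-K^{-1})=qK+q^{-1}K^{-1}$, the trivial commutation with $K$, and the cancellation in $[\omega,S_{+}]$ via $q^{-1}-q^{-3}=q^{-2}(q-q^{-1})$ and $q-q^{3}=-q^{2}(q-q^{-1})$ all check out. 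What your route buys is precision the paper's citation does not: your verification only closes if the denominator written as $(q-q^{-1})2$ is read as $(q-q^{-1})^{2}$ (with $2(q-q^{-1})$ the two displayed expressions for $\omega$ would agree only when $q-q^{-1}=2$, and the commutators would not vanish), so your computation actually pins down the intended normalization. What the paper's route buys is brevity and consistency with the standard $\mathcal{U}_{q}\sl(2,\BR)$ theory it invokes throughout. Both are valid; yours is the more informative proof to record.
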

\begin{proof}
The proof of this lemma follows easily from corollary \eqref{cor 3.2}.
\end{proof}

\section{Isomorphism from $\Spin (\frac12)$ to $\SU(1,1)$-quasi boson}\label{boson}

\begin{defn}
	The Spin Lie group of all spin one-half particles with quantum state spanned by 2 states, $2\times2$ real matrices and determinant $1$ when $\hbar=1$ is denoted by $\Spin_{\BR}(\frac12)$ \cite{hounkonnou2022group}.
\end{defn}
The real Lie algebra $\g$ of the $\Spin_{\BR}(\frac12)$ is given by:
\begin{align}
	\spin_{\BR}\left(\frac{1}{2}\right)={\{S\in M_{2}(\BR) \mid \Tr S=0 }\}.
\end{align}

\begin{thm}\label{thm2.2}\cite{takahashi1961fonctions}
	\begin{enumerate}
		\item [(1)]Let C be a 2$\times$2 unitary matrix given by 
		\begin{align*}
			C=\frac{1}{\sqrt{2}}\left(\begin{array}{cc}
				1 & -i\\
				1 & i
			\end{array}\right)\in U\left(2\right),
			\quad and \quad let \quad g=\left(\begin{array}{cc}
				a & b\\
				c & d
			\end{array}\right)\in GL(2,\BR).
		\end{align*}
		Put $\mathrm{G}=C\cdotp g\cdotp C^{-1}$. Then $\mathrm{G}= \left\lbrace \left(\begin{array}{cc}
			\alpha & \beta\\
			\bar{\beta} & \bar{\alpha}
		\end{array}\right)\mid |\alpha|^{2}-|\beta|^{2}=1 \right\rbrace$.
		\\	If $\mathrm{G}=C\cdotp \left(\begin{array}{cc}
			a & b\\
			c & d
		\end{array}\right)\cdotp C^{-1}= \left(\begin{array}{cc}
			\alpha & \beta\\
			\bar{\beta} & \bar{\alpha}
		\end{array}\right)\in \SU(1,1)-\mbox{quasi boson} $, then 
		
		\[
		\alpha=\frac{1}{2}\left\{ \left(a+d\right)+i\left(b-c\right)\right\}, 
		\]
		\[
		\beta=\frac{1}{2}\left\{ \left(a-d\right)-i\left(b+c\right)\right\}. 
		\]
		\item [(2)]\cite{hounkonnou2022group}
		Any element $g\in\SU(1,1)-\mbox{quasi boson}$ can be uniquely decomposed into the form $$
		g= \mbox{\CYRZH} k_{\theta}d^{\frac12}_{t}n_{\xi}
		$$
		with 
		$\exp({i\frac{\theta}{2}})=\frac{\alpha+\beta}{\mid\alpha+\beta\mid}$, $\exp({t})=\mid\alpha+\beta\mid^{2}$ and $\xi=Im\left(\frac{\alpha-\beta}{\alpha+\beta}\right)$ when $\mbox{\CYRZH} =1$.\newline
		By isomorphism $w:g\longrightarrow C \cdotp g \cdotp C^{-1}$ from $\Spin_{\BR}\left(\frac{1}{2}\right)$ to $\mathrm{G}$, we obtain
		
		\begin{gather*}
			k_{\theta}=w( k_{\theta})=\left(\begin{array}{cc}
				\exp({i\frac{\theta}{2}}) & 0\\
				0 &  \exp({-i\frac{\theta}{2}})
			\end{array}\right),
			\\
			d^{\frac{1}{2}}_{t}=w\left( d^{\frac{1}{2}}_{t}\right)= \left(\begin{array}{cc}
				\cosh\frac{t}{2} & \sinh\frac{t}{2}\\
				\sinh\frac{t}{2} & \cosh\frac{t}{2}
			\end{array}\right),
			\label{eq117}
			\\
			n_{\xi}=w\left( n_{\xi}\right)= \left(\begin{array}{cc}
				1+i\frac{\xi}{2} & -i\frac{\xi}{2}\\
				i\frac{\xi}{2} & 1-i\frac{\xi}{2}
			\end{array}\right).
			\label{eq118}
		\end{gather*}
		this yields the usual Iwasawa decomposition when $\mbox{\CYRZH}=1$, that is, 
		$
		g= k_{\theta}d^{\frac12}_{t}n_{\xi}.
		$
		
		\item [(3)] By the isomorphism $w:\Spin_{\BR}\left(\frac{1}{2}\right)\longrightarrow \SU(1,1)$, we write $w( k_{\theta})=k_{\theta}$, $w\left( d^{\frac{1}{2}}_{t}\right)=d^{\frac{1}{2}}_{t}$, $w\left( n_{\xi}\right)=n_{\xi}$ and $w(gk)_{\theta}=k_{g.\theta}d^{\frac12}_{t\left(g,\theta\right)}n_{\xi\left(g,\theta\right)}$.\\
		If $g=\left(\begin{array}{cc}
			\alpha & \beta\\
			\bar{\beta} & \bar{\alpha}
		\end{array}\right)\in \SU(1,1)$, then we have
		
		\begin{itemize}
			\item [(i)]	$\exp\bigg({i\frac{g.\theta}{2}}\bigg)=\dfrac{\alpha \exp({i\frac{\theta}{2}})+\beta \exp({-i\frac{\theta}{2}})}{\mid\alpha \exp({i\frac{\theta}{2}})+\beta \exp({-i\frac{\theta}{2}})\mid};$
			\item [(ii)]$\exp({t\left(g,\theta\right)})=\mid\alpha \exp({i\theta})+\beta\mid^{2}=\mid\bar{\alpha}+\bar\beta \exp({i\theta})\mid^{2};$
			
			\item[(iii)]$d\dfrac{\left(g,\theta\right)}{d\theta}=\mid\bar{\alpha}+\bar\beta \exp({i\theta})\mid^{-2}=\exp({-t\left(g,\theta\right)}).$
		\end{itemize}
	\end{enumerate}
\end{thm}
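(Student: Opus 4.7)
The plan is to treat the three parts of Theorem~\ref{thm2.2} separately, with each reducing to a direct matrix or vector computation once the right auxiliary element is chosen to act on.

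For part~(1) I would simply expand $CgC^{-1}$ for $g=\left(\begin{smallmatrix}a&b\\c&d\end{smallmatrix}\right)\in GL(2,\BR)$, using $C^{-1}=\tfrac{1}{\sqrt{2}}\left(\begin{smallmatrix}1&1\\i&-i\end{smallmatrix}\right)$. The $(1,1)$ and $(1,2)$ entries collapse directly into the claimed $\alpha=\tfrac12[(a+d)+i(b-c)]$ and $\beta=\tfrac12[(a-d)-i(b+c)]$, while the $(2,1)$ and $(2,2)$ entries come out as $\bar{\beta}$ and $\bar{\alpha}$ by inspection, forcing the target matrix into the quasi-boson form. The identity $|\alpha|^{2}-|\beta|^{2}=\det g$ then follows by expanding squares, since the combination $(a+d)^{2}-(a-d)^{2}-(b+c)^{2}+(b-c)^{2}$ telescopes to $4(ad-bc)$.

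For part~(2) the efficient route is to exploit the vector $v=(1,1)^{T}$. A direct check shows $n_{\xi}v=v$ for every $\xi$ and $d^{\frac12}_{t}v=e^{t/2}v$, so applying $g=k_{\theta}d^{\frac12}_{t}n_{\xi}$ to $v$ yields $e^{t/2}(e^{i\theta/2},e^{-i\theta/2})^{T}$. Comparing with $gv=(\alpha+\beta,\overline{\alpha+\beta})^{T}$ reads off $e^{t}=|\alpha+\beta|^{2}$ and $e^{i\theta/2}=(\alpha+\beta)/|\alpha+\beta|$ in a single stroke. To extract $\xi$ I would apply $g$ to $w=(1,-1)^{T}$; after cancelling the already-known $\theta$ and $t$ factors, the matching condition reduces to the relation $(\alpha-\beta)/(\alpha+\beta)=|\alpha+\beta|^{-2}+i\xi$, whose imaginary part forces $\xi=\mathrm{Im}[(\alpha-\beta)/(\alpha+\beta)]$. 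Uniqueness is immediate since each parameter is pinned down unambiguously at its stage.

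Part~(3) reduces to applying the method of part~(2) to the element $gk_{\theta}$. Using $gk_{\theta}v=(\alpha e^{i\theta/2}+\beta e^{-i\theta/2},\overline{\alpha e^{i\theta/2}+\beta e^{-i\theta/2}})^{T}$ and comparing with $e^{t(g,\theta)/2}(e^{ig.\theta/2},e^{-ig.\theta/2})^{T}$ yields identities~(i) and~(ii) at once, after noting that $|\alpha e^{i\theta/2}+\beta e^{-i\theta/2}|=|\alpha e^{i\theta}+\beta|=|\bar{\alpha}+\bar{\beta}e^{i\theta}|$. For~(iii) I would differentiate the expression in~(i) with respect to $\theta$ and collapse the result using $|\alpha|^{2}-|\beta|^{2}=1$; the Jacobian emerges as $|\bar{\alpha}+\bar{\beta}e^{i\theta}|^{-2}=e^{-t(g,\theta)}$. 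The main obstacle I anticipate is not structural but purely bookkeeping: keeping the half-angles $e^{\pm i\theta/2}$, the hyperbolic half-angles in $d^{\frac12}_{t}$, and the $\pm i\xi/2$ entries of $n_{\xi}$ mutually consistent throughout, since a sign error in part~(2) would cascade into every formula of part~(3).
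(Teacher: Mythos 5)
Your proposal is correct, and for parts (1) and (3) it follows essentially the same route as the paper: part (1) is the identical direct expansion of $CgC^{-1}$ (the paper likewise reads off $\alpha,\beta$ from the entries and identifies $|\alpha|^{2}-|\beta|^{2}$ with $\det g$), and part (3) is the same reduction, namely writing out $gk_{\theta}=\left(\begin{smallmatrix}\alpha e^{i\theta/2} & \beta e^{-i\theta/2}\\ \bar{\beta}e^{i\theta/2} & \bar{\alpha}e^{-i\theta/2}\end{smallmatrix}\right)$ and feeding the new ``$\alpha$'' and ``$\beta$'' back into the formulas of part (2). The genuine divergence is in part (2). The paper first computes $w(k_{\theta})$, $w(d^{\frac12}_{t})$, $w(n_{\xi})$ by conjugating the classical $SL(2,\BR)$ Iwasawa factors by $C$, and then extracts $\theta$, $t$, $\xi$ by pulling back to the real entries via the relation $a-ic=\alpha+\beta$ inherited from part (1). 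You instead stay entirely inside $\SU(1,1)$ and exploit the fact that $v=(1,1)^{T}$ is fixed by $n_{\xi}$ and is an $e^{t/2}$-eigenvector of $d^{\frac12}_{t}$, so that $gv=(\alpha+\beta,\overline{\alpha+\beta})^{T}=e^{t/2}(e^{i\theta/2},e^{-i\theta/2})^{T}$ gives $\theta$ and $t$ at once, with $w=(1,-1)^{T}$ then yielding $(\alpha-\beta)/(\alpha+\beta)=e^{-t}+i\xi$ and hence $\xi$. This is cleaner and makes uniqueness transparent, at the cost of not exhibiting the explicit matrices $w(k_{\theta})$, $w(d^{\frac12}_{t})$, $w(n_{\xi})$, which the theorem statement also asserts and which the paper obtains as a by-product of its conjugation computation; you would still need to record those three matrix identities separately. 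Note also that your treatment of (3)(iii) by differentiating $e^{ig.\theta}=(\alpha e^{i\theta}+\beta)/(\bar{\alpha}+\bar{\beta}e^{i\theta})$ and invoking $|\alpha|^{2}-|\beta|^{2}=1$ is actually more complete than the paper's, which merely asserts that (iii) follows from the same substitution.
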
  
\begin{proof}[Proof of (1)]\cite{takahashi1961fonctions}
	We begin by showing that the $\Spin_{\BR}(\frac{1}{2})$ is isomorphic to the $\SU(1,1)$-quasi boson. Let $\mathcal{D}=\left\{ z\in\mathbb{C}:\mid z\mid<1\right\} $ be the unit disc. The Cayley transformation
	\[
	c:\thinspace z\longmapsto c(z)=\frac{z-i}{z+i}
	\]	
	transforms the upper half-plane $P$ onto $\mathcal{D}.$ We can
	now speak of an isomorphism from $A(P)$ onto $A(\mathcal{D})$ given by
	\begin{equation*}
		w:g\longmapsto Cg C^{-1}.
	\end{equation*}
	Let C be a 2$\times$2 Unitary matrix given by 
	\begin{equation*}
		C=\frac{1}{\sqrt{2}}\left(\begin{array}{cc}
			1 & -i\\
			1 & i
		\end{array}\right)\in U\left(2\right),
	\end{equation*}
	and let $g=\left(\begin{array}{cc}
		a & b\\
		c & d
	\end{array}\right)\in M_{2}\left(\mathbb{R}\right)$ . Then 
	
	\begin{align*}
	Cg C^{-1}&=\frac{1}{2}\left(\begin{array}{cc}
		1 & -i\\
		1 & i
	\end{array}\right)\left(\begin{array}{cc}
		a & b\\
		c & d
	\end{array}\right)\left(\begin{array}{cc}
		1 & 1\\
		i & -i
	\end{array}\right)\\
		&=\frac{1}{2}\left[\begin{array}{cc}
			\left(a+d\right)+i\left(b-c\right) & \left(a-d\right)-i\left(b+c\right)\\
			\left(a-d\right)+i\left(b+c\right) & \left(a+d\right)-i\left(b-c\right)
		\end{array}\right].
	\end{align*}
	Thus $C\cdotp g\cdotp C^{-1}$ is of the
	form $\left(\begin{array}{cc}
		\alpha & \beta\\
		\bar{\beta} & \bar{\alpha}
	\end{array}\right)$
	where 
	
	\[
	\alpha=\frac{1}{2}\left\{ \left(a+d\right)+i\left(b-c\right)\right\} 
	\]
	\[
	\beta=\frac{1}{2}\left\{ \left(a-d\right)-i\left(b+c\right)\right\} 
	\]
	with $\det g = \mid\alpha^{2}\mid-\mid\beta^{2}\mid= 1$ where $\alpha,\beta\in\mathbb{C}$.
	It is a subgroup of $GL\left(2,\mathbb{\mathbb{C}}\right)$ formed
	from matrix $g$ such that
	\begin{equation}
		\mathrm{G}=\SU\left(1,1\right)=\left\{ g\in GL(2,\mathbb{\mathbb{C}})\mid\bar{g}^{t}\sigma_{z}\thickspace g=\sigma_{z},\thinspace\det g=1\right\} 
		\label{eq115}
	\end{equation}
	where $\sigma_{z}=\left(\begin{array}{cc}
		1 & 0\\
		0 & -1
	\end{array}\right)$, and $\bar{g}^{t}$ is the adjoint matrix of $g$. 
\end{proof}	
\begin{proof}[poof of (2)]
	From equation \eqref{eq115} and by the isomorphism $w:g\longmapsto Cg C^{-1}$ from $\mathrm{G}\longmapsto SU\left(1,1\right)$, we have:
	\begin{align*}
		k_{\theta}&=w( k_{\theta})=\frac{1}{2}\left(\begin{array}{cc}
			1 & -i\\
			1 & i
		\end{array}\right)\left(\begin{array}{cc}
			\cos\frac{\theta}{2} & \sin\frac{\theta}{2}\\
			-\sin\frac{\theta}{2} & \cos\frac{\theta}{2}
		\end{array}\right)\left(\begin{array}{cc}
			1 & 1\\
			i & -i
		\end{array}\right)\\ &=\left(\begin{array}{cc}
			\exp({i\frac{\theta}{2}}) & 0\\
			0 & \exp({-i\frac{\theta}{2}})
		\end{array}\right),
	\end{align*}
	from similar and direct computation we obtain;
	\begin{equation*}
		d^{\frac{1}{2}}_{t}=w\left( d^{\frac{1}{2}}_{t}\right)= \left(\begin{array}{cc}
			\cosh\frac{t}{2} & \sinh\frac{t}{2}\\
			\sinh\frac{t}{2} & \cosh\frac{t}{2}
		\end{array}\right),
		\label{eq117}
	\end{equation*}
	\begin{equation*}
		n_{\xi}=w\left( n_{\xi}\right)= \left(\begin{array}{cc}
			1+i\frac{\xi}{2} & -i\frac{\xi}{2}\\
			i\frac{\xi}{2} & 1-i\frac{\xi}{2}
		\end{array}\right).
		\label{eq118}
	\end{equation*}
	Now we have the following consequence;
	\begin{equation*}
		a-ic=\alpha+\beta,
	\end{equation*}	
	we obtain
	\begin{gather}
		\exp\bigg({i\frac{\theta}{2}}\bigg)=\frac{\alpha+\beta}{ \mid\alpha+\beta\mid},\\ \nonumber
		\exp({t})=\mid\alpha+\beta\mid^{2},\\ \nonumber
		\xi=Im\left(\frac{\alpha-\beta}{\alpha+\beta}\right).
		\label{116}
	\end{gather}
	We have $
	g= \mbox{\CYRZH}  k_{\theta}d^{\frac12}_{t}n_{\xi}
	$ and when we set $\mbox{\CYRZH}=1$ one gets the usual Iwasawa decomposition. 
\end{proof}	
\begin{proof}[Proof of (3)]
	Also applying isomorphism $w:g\longmapsto Cg C^{-1}$ we
	have the following:
	\begin{equation*}
		w\left(g\cdot k_{\theta}\right)=k_{g.\theta}d^{\frac{1}{2}}_{t\left(g,\theta\right)}n_{\xi\left(g,\theta\right)},
		\label{eq123}
	\end{equation*}
	
	if $g=\left(\begin{array}{cc}
		\alpha & \beta\\
		\bar{\beta} & \bar{\alpha}
	\end{array}\right)\in \SU\left(1,1\right)$, 
	then \begin{gather*}
		g k_{\theta}=\left(\begin{array}{cc}
			\alpha & \beta\\
			\bar{\beta} & \bar{\alpha}
		\end{array}\right)\left(\begin{array}{cc}
			\exp({i\frac{\theta}{2}}) & 0\\
			0 & \exp({-i\frac{\theta}{2}})
		\end{array}\right)=\left(\begin{array}{cc}
			\alpha \exp({i\frac{\theta}{2}}) & \beta \exp({-i\frac{\theta}{2}})\\
			\bar{\beta}\exp({i\frac{\theta}{2}}) & \bar{\alpha}\exp({-i\frac{\theta}{2}})
		\end{array}\right).
	\end{gather*} 
	We can apply equation (\eqref{116}) to $\alpha\exp({i\frac{\theta}{2}})$ and $\beta \exp({-i\frac{\theta}{2}})$ to obtain Theorem \eqref{thm2.2} [3(i), (iii), and(iii)].
	This completes the proof.	
\end{proof}

\begin{lem}\label{lem2.3}\cite{takahashi1961fonctions, hounkonnou2022group}
	Any element $g$ in $\SU\left(1,1\right)-$quasi boson or $\Spin (\frac{1}{2})$ can be written as 
	\begin{align*}
		g&=k_{\varphi}d^{\frac{1}{2}}_{t}k_{\psi}\\
	&=\exp\left(\varphi\bra{s,m}{\sigma_{k}}\ket{s,m}\right) 
		\exp\left(t\bra{s,m}{\sigma_{z}}\ket{s,m}\right) 
		\exp\left(\psi\bra{s,m}{\sigma_{k}}\ket{s,m}\right).
	\end{align*}	
	for $0\leq\varphi\leq4\pi$ , $0\leq t$ , $0\leq\psi\leq2\pi$ if
	$g=\left(\begin{array}{cc}
		\alpha & \beta\\
		\bar{\beta} & \bar{\alpha}
	\end{array}\right)\in \mathrm{G}$ , then 
	
	\begin{enumerate}
		\item (i) $\alpha=\exp({i\frac{(\varphi+\psi)}{2}})\cosh\frac{t}{2}$,
		\\(ii) $ \beta=\exp({i\frac{(\varphi-\psi)}{2}})\sinh\frac{t}{2}$,
		\item (i) $\sinh\frac{t}{2}=\mid\beta\mid$,
		\\ (ii) $\cosh\frac{t}{2}=\left(1+\mid\beta\mid^{2}\right)^{\frac{1}{2}}=\mid\alpha\mid$,\\ 
		(iii) $\exp({\frac{t}{2}})=\mid\alpha\mid+\mid\beta\mid$,
		\\ 
		In particular if $g$ belongs to $\mathrm{G}-K$, then 
		
		\item (i)
		$\exp({i\frac{\left(\varphi+\psi\right)}{2}})=\frac{\alpha}{\mid\alpha\mid}$,\\
		(ii) $\exp({i\frac{\left(\varphi-\psi\right)}{2}})=\frac{\beta}{\mid\beta\mid}$,\\
		(iii) $\exp({i\varphi})=  \frac{\left(\alpha\beta\right)}{\mid\alpha\beta\mid}$, \\
		(iv) $\exp({i\psi})=  \frac{\left(\alpha\beta^{-1}\right)}{\mid\alpha\beta^{-1}\mid}$, \\
		and 
		$\left(\varphi, t,\psi\right)$ is uniquely
		determined by $g.$ If $g\in K$, then $t=0$ and $\varphi+\psi$
		is determined modulo $4\pi$ by $g$. 
	\end{enumerate}
\end{lem}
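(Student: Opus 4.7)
The plan is to prove the decomposition $g=k_\varphi d^{1/2}_t k_\psi$ by direct matrix multiplication using the explicit forms of $k_\theta$ and $d^{1/2}_t$ from Theorem~\ref{thm2.2}(2), then solve algebraically for the parameters and verify uniqueness on the strata $\mathrm{G}\setminus K$ and $K$.

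First, I would multiply out the three matrices. Using
$k_\theta=\bigl(\begin{smallmatrix}e^{i\theta/2}&0\\0&e^{-i\theta/2}\end{smallmatrix}\bigr)$ and the matrix form of $d^{1/2}_t$, a direct computation gives
\begin{equation*}
k_\varphi d^{1/2}_t k_\psi=\begin{pmatrix}
e^{i(\varphi+\psi)/2}\cosh\frac{t}{2} & e^{i(\varphi-\psi)/2}\sinh\frac{t}{2}\\[0.5ex]
e^{-i(\varphi-\psi)/2}\sinh\frac{t}{2} & e^{-i(\varphi+\psi)/2}\cosh\frac{t}{2}
\end{pmatrix}.
\end{equation*}
Comparing with $g=\bigl(\begin{smallmatrix}\alpha&\beta\\ \bar\beta&\bar\alpha\end{smallmatrix}\bigr)$ gives items~1(i) and~1(ii) at once, and the $\SU(1,1)$ constraint $|\alpha|^2-|\beta|^2=1$ is automatically recovered from $\cosh^2(t/2)-\sinh^2(t/2)=1$.

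Next, taking moduli in~1(i)--(ii) and using $t\ge 0$ yields $|\beta|=\sinh(t/2)$ and $|\alpha|=\cosh(t/2)=(1+|\beta|^2)^{1/2}$, hence $e^{t/2}=\cosh(t/2)+\sinh(t/2)=|\alpha|+|\beta|$, which is item~2. To establish existence, I would, conversely, start from any $g\in\SU(1,1)$ with $|\alpha|^2-|\beta|^2=1$ and define $t\ge 0$ by $\sinh(t/2)=|\beta|$; if $\beta\neq 0$ set $e^{i(\varphi+\psi)/2}=\alpha/|\alpha|$ and $e^{i(\varphi-\psi)/2}=\beta/|\beta|$, then multiply and divide these two unit complex numbers to recover $e^{i\varphi}=\alpha\beta/|\alpha\beta|$ and $e^{i\psi}=\alpha\beta^{-1}/|\alpha\beta^{-1}|$, which are items~3(i)--(iv). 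Plugging these back into the matrix computed above reproduces $g$, proving existence.

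For uniqueness, when $g\in\mathrm{G}\setminus K$ (i.e.\ $\beta\neq 0$) the formulas just derived determine $(\varphi,t,\psi)$ without ambiguity in the ranges $0\le\varphi\le 4\pi$, $0\le\psi\le 2\pi$, $t\ge 0$; when $g\in K$ one has $\beta=0$, forcing $\sinh(t/2)=0$ and hence $t=0$, while only the combination $e^{i(\varphi+\psi)/2}=\alpha$ is determined, so $\varphi+\psi$ is fixed only modulo $4\pi$. The exponential reformulation in the lemma follows from substituting the Pauli-matrix expressions $\sigma_k$, $\sigma_z$ already used in Theorem~\ref{thm2.2} and in the definition of $d^{1/2}_t$ in~\eqref{rota}. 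The only mildly delicate step is the bookkeeping of the angular ranges and the separate treatment of the compact stratum $\beta=0$, but there is no structural obstacle beyond careful case analysis.
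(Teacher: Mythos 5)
Your proof is correct. The paper itself gives no proof of Lemma~\ref{lem2.3} --- it is quoted from the cited references --- and your argument (direct multiplication of $k_\varphi d^{\frac{1}{2}}_t k_\psi$, reading off $\alpha,\beta$, inverting for $(\varphi,t,\psi)$ via $\sinh\frac{t}{2}=|\beta|$ and the phase relations, and treating the stratum $\beta=0$ separately) is exactly the standard computation those references carry out, so it supplies precisely what the paper leaves implicit.
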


\begin{prop}\label{2.4}\cite{takahashi1961fonctions}
	\begin{enumerate}
	\item[(1)] The left-invariant Haar integration for any element $g\in \Spin(\frac{1}{2})$ or $\SU\left(1,1\right)$-quasi boson of $G$ of a spin particle is given by
	\begin{equation*}
		\int_{\mathrm{G}}f\left(g\right)dg=\frac{\mbox{\CYRZH}}{4\pi}\int_{0}^{4\pi}\int_{-\infty}^{\infty}\int_{-\infty}^{\infty}f\left(k_{\theta}d^{\frac{1}{2}}_{t}n_{\xi}\right)\exp({t})d\theta dtd\xi
	\end{equation*}
	for any continuous function with compact support, where $g=k_{\theta}d^{\frac{1}{2}}_{t}n_{\xi}$
	is the Iwasawa decomposition by setting \CYRZHDSC=1. 
	
	\item[(2)] In the case of an electron we have that $g=\hbar k_{\theta}d^{\frac{1}{2}}_{t}n_{\xi}$, where $\hbar=\dfrac{h}{2\pi}$, thus we have the left-invariant Haar measure to be
	\begin{align}
		\int_{\mathrm{G}}f\left(g\right)dg=\frac{\hbar}{4\pi}\int_{0}^{4\pi}\int_{-\infty}^{\infty}\int_{-\infty}^{\infty}f\left(k_{\theta}d^{\frac{1}{2}}_{t}n_{\xi}\right)\exp({t})d\theta dtd\xi \\ \nonumber =\frac{h}{8\pi^{2}}\int_{0}^{4\pi}\int_{-\infty}^{\infty}\int_{-\infty}^{\infty}f\left(k_{\theta}d^{\frac{1}{2}}_{t}n_{\xi}\right)\exp({t})d\theta dtd\xi
	\end{align}
	\end{enumerate}
\end{prop}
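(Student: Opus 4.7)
The plan is to verify left-invariance of the proposed measure by working in the global Iwasawa coordinates $(\theta, t, \xi)$ furnished by Theorem \ref{thm2.2}(2). Since that theorem gives a unique decomposition $g = \mbox{\CYRZH}\, k_\theta d^{\frac12}_t n_\xi$ (with $\theta \in [0, 4\pi)$ on the spin double cover, $t, \xi \in \BR$), it suffices to show that the density $e^t\, d\theta\, dt\, d\xi$ is invariant under left translation by an arbitrary $g_0 \in G$; the constants $\mbox{\CYRZH}/(4\pi)$ are then fixed afterwards by normalizing the compact factor $K$ to total mass one.

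The key computation is to track how left multiplication by $g_0$ permutes the three coordinates. First, I would apply Theorem \ref{thm2.2}(3) to rewrite $g_0\, k_\theta = k_{g_0.\theta}\, d^{\frac12}_{t(g_0,\theta)}\, n_{\xi(g_0,\theta)}$, and then use the conjugation identity $d^{-\frac12}_t\, n_\eta\, d^{\frac12}_t = n_{e^{-t}\eta}$ (read off directly from the explicit matrices of Theorem \ref{thm2.2}(2)) to collect the product back into Iwasawa form. This yields new coordinates $\theta' = g_0.\theta,\; t' = t + t(g_0,\theta),\; \xi' = \xi + e^{-t}\xi(g_0,\theta).$ Because $\theta'$ depends only on $\theta$, $t'$ only on $(\theta, t)$, and $\xi'$ on all three, the Jacobian matrix of this substitution is lower triangular, and its determinant reduces to $\partial\theta'/\partial\theta$, which by Theorem \ref{thm2.2}(3)(iii) equals $e^{-t(g_0,\theta)}$.

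Left-invariance then follows from a one-line cancellation: $e^{t'}\, d\theta'\, dt'\, d\xi' = e^{t + t(g_0,\theta)} \cdot e^{-t(g_0,\theta)}\, d\theta\, dt\, d\xi = e^t\, d\theta\, dt\, d\xi$. The normalization $1/(4\pi)$ matches the range of $\theta$ on the spin group, making $K$ a probability space, and the overall scalar \mbox{\CYRZH} survives intact from the decomposition. Part (2) is then immediate: specializing $\mbox{\CYRZH} = \hbar$ for the electron and using $\hbar = h/(2\pi)$ rewrites $\hbar/(4\pi)$ as $h/(8\pi^2)$.

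The step I expect to be most delicate is pinning down the conjugation formula $d^{-\frac12}_t\, n_\xi\, d^{\frac12}_t = n_{e^{-t}\xi}$ with the precise sign and factor-of-two conventions of Theorem \ref{thm2.2}(2); a miscount of a $\tfrac12$ or a sign here throws off the diagonal of the Jacobian and breaks the exact cancellation that produces left-invariance. Once that commutation is verified directly from the matrix forms of $d^{\frac12}_t$ and $n_\xi$, the remaining ingredients — global Iwasawa uniqueness, the derivative formula of Theorem \ref{thm2.2}(3)(iii), and the triangular Jacobian — fit together mechanically to yield both (1) and (2).
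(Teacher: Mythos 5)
Your proof is correct and follows essentially the same route as the paper's: both verify left-invariance of the density $e^{t}\,d\theta\,dt\,d\xi$ in the Iwasawa coordinates, using the conjugation identity $d^{\frac12}_{t}\,n_{\xi}\,d^{\frac12}_{-t}=n_{e^{t}\xi}$ together with the derivative formula of Theorem~\ref{thm2.2}(3)(iii) to produce the cancelling Jacobian factor $e^{-t(g_{0},\theta)}$. The only difference is organizational: the paper first reduces to the generating cases $g_{0}=k_{\varphi}$ and $g_{0}=d^{\frac12}_{\tau}$ via the factorization $g_{0}=k_{\varphi}d^{\frac12}_{\tau}k_{\psi}$ of Lemma~\ref{lem2.3}, whereas you treat a general $g_{0}$ in a single pass with the triangular Jacobian.
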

\begin{proof}
	Put $dg=\exp({t})d\theta dtd\xi$, we shall prove that $d\left(g_{o}g\right)=dg$
	for any $g_{o}\in \mathrm{G}$. Every element of $g_{o}\in \mathrm{G}$ can be written
	as 
	\begin{equation*}
		g_{o}=k_{\varphi}d^{\frac{1}{2}}_{\tau}k_{\psi}
	\end{equation*}
	hence it is sufficient to prove that $d\left(g_{o}g\right)=dg$ for
	$g_{o}=k_{\varphi}$ and $d^{\frac{1}{2}}_{\tau}$ . Since 
	\begin{equation*}
		k_{\varphi}k_{\theta}d^{\frac{1}{2}}_{t}n_{\xi}=k_{\varphi+\theta}d^{\frac{1}{2}}_{t}n_{\xi}
	\end{equation*}	
	we obtain 
	\begin{equation*}
		d\left(k_{\varphi}g\right)=\exp({t})d\left(\varphi+\theta\right)dtd\xi=\exp({t})d\theta dtd\xi=dg
	\end{equation*}
	By simple computation we have 
	\begin{equation*}
		d^{\frac{1}{2}}_{t}n_{\xi}d^{\frac{1}{2}}_{t^{-1}}=n_{\exp({t})\xi}
	\end{equation*}
	we put $t^{\prime}=t\left(d^{\frac{1}{2}}_{\tau},\theta\right)$ and $\xi^{\prime}=\xi\left(d^{\frac{1}{2}}_{\tau},\theta\right)$. Then we have 
	\begin{align*}
	d^{\frac{1}{2}}_{\tau}k_{\theta}d^{\frac{1}{2}}_{\tau}n_{\xi}&=k_{d^{\frac{1}{2}}_{\tau}.\theta}d^{\frac{1}{2}}_{t^{\prime}}n_{\xi^{\prime}}d^{\frac{1}{2}}_{t}n_{\xi}\\
	&=k_{d^{\frac{1}{2}}_{\tau}.\theta}d^{\frac{1}{2}}_{t^{\prime}+t}n_{\exp({-t})\xi^{\prime}+\xi}\\
	&=k_{d^{\frac{1}{2}}_{\tau}.\theta}d^{\frac{1}{2}}_{t^{\prime}+t}n_{\exp({-t})\xi^{\prime}+\xi}.
\end{align*}
	Hence we have
	\begin{align*}
		d\left(d^{\frac{1}{2}}_{\tau}g\right)&=  \exp({t+t^{\prime}})d\left(d^{\frac{1}{2}}_{\tau}.\theta\right)d\left(t+t^{\prime}\right)d\left(\xi+\exp({-t})\xi^{\prime}\right)\\
		&=\exp({t+t^{\prime}})\exp({-t^{\prime}})d\theta dtd\xi=dg.
	\end{align*}
The proof of [(2)] is straight forward.
Thus the proof is complete.
\end{proof}

\begin{prop}
	The Haar integration proposition \eqref{2.4} is given by 
	\begin{equation*}
		\int_{\mathrm{G}}f\left(g\right)dg=2\pi\int_{K}\int_{0}^{\infty}\int_{K}f\left(kd^{\frac{1}{2}}_{t}k^{\prime}\right)\sinh tdkdtdk^{\prime}
	\end{equation*}
and for an electron we obtain;
	\begin{align*}
	\int_{\mathrm{G}}f\left(g\right)dg&=2\pi\hbar\int_{K}\int_{0}^{\infty}\int_{K}f\left(kd^{\frac{1}{2}}_{t}k^{\prime}\right)\sinh tdkdtdk^{\prime}
	\\
	&=2\pi \left(\dfrac{h}{2\pi}\right)\int_{K}\int_{0}^{\infty}\int_{K}f\left(kd^{\frac{1}{2}}_{t}k^{\prime}\right)\sinh tdkdtdk^{\prime}\\
	&=h\int_{K}\int_{0}^{\infty}\int_{K}f\left(kd^{\frac{1}{2}}_{t}k^{\prime}\right)\sinh tdkdtdk^{\prime}
\end{align*}
	where $dk$ is the normalized Haar measure $\left(\frac{1}{4\pi}\right)d\theta$
	of $K=\left\{ k_{\theta}\mid0\leq\theta<4\pi\right\} .$ 
\end{prop}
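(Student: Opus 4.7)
The plan is to change variables in the Haar integral of Proposition~\ref{2.4} from the Iwasawa coordinates $(\theta, t, \xi)$ to the Cartan ($KAK$) coordinates $(\varphi, t', \psi)$ supplied by Lemma~\ref{lem2.3}, and to show that the associated Jacobian produces the $\sinh$ factor. I see two viable routes; I would attempt the second first because it localises the only genuine calculation.

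The direct route: start from $g = k_\varphi d^{1/2}_{t'} k_\psi$ and use Theorem~\ref{thm2.2} to express the Iwasawa coordinates of $g$ via
\[
\exp(i\theta/2) = \frac{\alpha+\beta}{|\alpha+\beta|}, \qquad \exp(t) = |\alpha+\beta|^2, \qquad \xi = \mathrm{Im}\!\left(\frac{\alpha-\beta}{\alpha+\beta}\right),
\]
where Lemma~\ref{lem2.3} gives $\alpha = \exp(i(\varphi+\psi)/2)\cosh(t'/2)$ and $\beta = \exp(i(\varphi-\psi)/2)\sinh(t'/2)$. A direct computation of the Jacobian $\partial(\theta,t,\xi)/\partial(\varphi,t',\psi)$, multiplied by $\exp(t)$, should reduce $\exp(t)\, d\theta\, dt\, d\xi$, after trigonometric and hyperbolic simplification of $|\alpha+\beta|^2 = \cos^2(\psi/2)\exp(t') + \sin^2(\psi/2)\exp(-t')$, to a constant multiple of $\sinh(t')\, d\varphi\, dt'\, d\psi$. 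Substituting into Proposition~\ref{2.4} and normalising $dk = d\theta/(4\pi)$ on each copy of $K$ then yields the $2\pi$ prefactor.

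The uniqueness-of-Haar route: denote by $\nu$ the measure defined by the right-hand side. Since $dk$ is Haar on $K$, $\nu$ is manifestly bi-$K$-invariant. Because $G = KAK$ by Lemma~\ref{lem2.3}, full left $G$-invariance of $\nu$ follows once left invariance under the one-parameter subgroup $A = \{d^{1/2}_s : s \in \mathbb{R}\}$ is verified; this check reduces, via the explicit $KAK$-decomposition of $d^{1/2}_s k_\theta d^{1/2}_t$, to a single-variable Jacobian computation that produces exactly $\sinh t$. Uniqueness of the left-invariant Haar measure then forces $\nu = c\,dg$ for some $c>0$, and $c = 1$ is pinned down by evaluating both sides on a bi-$K$-invariant test function such as the indicator of a slab $K d^{1/2}_{[t_0,t_1]} K$. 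Part~(2) of the proposition (the electron case) then follows immediately from part~(1) by inserting the factor $\hbar = h/(2\pi)$, exactly as in the passage from part~(1) to part~(2) of Proposition~\ref{2.4}.

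The main obstacle in either approach is the Jacobian; in the direct route it is a three-variable computation that must be carried out carefully, while in the uniqueness route it is confined to the single-variable $A$-invariance check. In both cases the crucial auxiliary ingredient is the explicit $KAK$ form of $d^{1/2}_s k_\theta d^{1/2}_t$, from which the $\sinh$ weight emerges naturally and only the overall normalising constant remains to be fixed.
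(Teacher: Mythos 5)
Your first (direct) route is essentially the paper's own proof: the paper sets $g=k_{\varphi}d^{\frac{1}{2}}_{\tau}k_{\psi}=k_{\theta}d^{\frac{1}{2}}_{t}n_{\xi}$, uses the relations of Lemma \eqref{lem2.3} together with $2\alpha\beta=\exp(i\varphi)\sinh\tau=\exp(i\theta)\left(\sinh t+\frac{\xi^{2}}{2}\exp(t)-i\xi\right)$ to get $\frac{\partial\varphi}{\partial\theta}=1$, $\frac{\partial\tau}{\partial\theta}=\frac{\partial\psi}{\partial\theta}=0$ and the remaining partials, and concludes $\bigl|\partial(\varphi,\tau,\psi)/\partial(\theta,t,\xi)\bigr|=\exp(t)/\sinh\tau$, after which the Iwasawa integral of Proposition \eqref{2.4} becomes the $KAK$ integral, the factor $2\pi$ coming from $d\psi=4\pi\,dk^{\prime}$ together with the doubling of the $\psi$-range. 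Your second (uniqueness-of-Haar) route is genuinely different and sound in outline: bi-$K$-invariance of the proposed measure is immediate, $K$ and $A$ generate $\mathrm{G}$, and unimodularity of the semisimple group $\SU(1,1)$ makes the uniqueness argument legitimate. Two caveats, however. First, the $A$-invariance check is not a single-variable Jacobian: writing $d^{\frac{1}{2}}_{s}k_{\varphi}d^{\frac{1}{2}}_{t}=k_{\varphi^{\prime}}d^{\frac{1}{2}}_{t^{\prime}}k_{\psi^{\prime}}$, both $\varphi^{\prime}$ and $t^{\prime}$ depend on $(\varphi,t)$, so a genuine $2\times2$ Jacobian must be controlled (the paper's three-variable computation collapses to exactly this after the observation that $\tau$ and $\psi$ are independent of $\theta$). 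Second, pinning down $c=1$ by testing on the indicator of $Kd^{\frac{1}{2}}_{[t_{0},t_{1}]}K$ is not free: evaluating the left-hand side requires describing the region $\{(\theta,t,\xi):\tau\in[t_{0},t_{1}]\}$ through $\sinh^{2}\tau=\left(\sinh t+\frac{\xi^{2}}{2}\exp(t)\right)^{2}+\xi^{2}$ and integrating $\exp(t)\,d\theta\,dt\,d\xi$ over it, which is essentially the change of variables you hoped to avoid. So the uniqueness route buys a cleaner invariance argument but relocates, rather than removes, the computation fixing the normalisation; the direct route --- the one the paper takes --- does both at once. The electron case is, as you say, just the insertion of $\hbar=h/(2\pi)$.
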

\begin{proof}
	Let $g=k_{\varphi}d^{\frac{1}{2}}_{\tau}k_{\psi}=k_{\theta}d^{\frac{1}{2}}_{t}n_{\xi}$ as in \cite{hounkonnou2022group}, where \CYRZHDSC=1, 
	$0\leq\theta<4\pi$, $0\leq\varphi$, $0\leq\tau<+\infty$, $-\infty\leq t,\xi<+\infty$
	and $0\leq\psi<2\pi$ .
	To demonstrate the integration formula, we have to calculate the
	Jacobian from the change of variable in $G-K$, 
	$\left(\theta,\thinspace t,\thinspace\xi\right)\longmapsto\left(\varphi,\thinspace\tau,\thinspace\psi\right)$ by putting 	
	\begin{equation}
		\left(\begin{array}{cc}
			\alpha & \beta\\
			\bar{\beta} & \bar{\alpha}
		\end{array}\right)=k_{\theta}d^{\frac{1}{2}}_{t}n_{\xi}=k_{\varphi}d^{\frac{1}{2}}_{\tau}k_{\psi}
	\end{equation}
	we have
	\begin{gather*}
		\alpha=\exp\bigg({i\frac{(\varphi+\psi)}{2}}\bigg)\cosh\frac{\tau}{2}=\exp\bigg({i\frac{\theta}{2}}\bigg)\left(\cosh\frac{t}{2}+i\frac{\xi}{2}\exp\bigg({\frac{t}{2}}\bigg)\right),
		\\ \nonumber
		\beta=\exp\bigg({i\frac{(\varphi-\psi)}{2}}\bigg)\sinh\frac{\tau}{2}=\exp\bigg({i\frac{\theta}{2}}\bigg)\left(\sinh\frac{t}{2}-i\frac{\xi}{2}\exp\bigg({\frac{t}{2}}\bigg)\right),
	\end{gather*}	
	and 
	\begin{equation*}
		2\alpha\beta=\exp({i\varphi})\sinh\tau=\exp({i\theta})\left(\sinh t+\frac{\xi^{2}}{2}\exp({t})-i\xi\right)
	\end{equation*}
	from which we obtain
	\begin{equation*}
		\sinh^{2}\tau=\left(\sinh t+\frac{\xi^{2}}{2}\exp({t})\right)^{2}+\xi^{2}
	\end{equation*}
	and 
	\begin{equation*}
		\exp({i\varphi})=\exp({i\theta})\frac{\sinh t+\frac{\xi^{2}}{2}\exp({t})-i\xi}{\sqrt{(\sinh t+\frac{\xi^{2}}{2}\exp({t}))^{2}+\xi^{2}}}
	\end{equation*}
	thus we have $\frac{\partial\tau}{\partial\theta}=0$ , $\frac{\partial\varphi}{\partial\theta}=1$, 
	moreover, we have
	\begin{equation}
		\frac{\beta}{\alpha}=\exp({-i\psi})\tanh\frac{\tau}{2}=\frac{\sinh\frac{t}{2}-i\frac{\xi}{2}\exp({\frac{t}{2}})}{\cosh\frac{t}{2}+i\frac{\xi}{2}\exp({\frac{t}{2}})}
		\label{eq152}
	\end{equation}	
	this shows that both $\psi$ and $\tau$ do not depend on $\theta.$
	We therefore have $\frac{\partial\psi}{\partial\theta}=0$. 
	Differentiating equation \eqref{eq152} with respect to $t$, we
	obtain
	\begin{equation*}
		-i\exp({-i\psi})\tanh\frac{\tau}{2}\frac{\partial\psi}{\partial t}+\frac{\exp({-i\psi})}{2\cosh^{2}\frac{\tau}{2}}\frac{\partial\tau}{\partial t}=\frac{1}{2\left(\cosh\frac{t}{2}+i\frac{\xi}{2}\exp({\frac{t}{2}})\right)^{2}}
	\end{equation*}	
	where 
	\[
	\exp({-i\psi})\tanh\frac{\tau}{2}\left(-i\frac{\partial\psi}{\partial t}+\frac{1}{\sinh\tau}\frac{\partial\tau}{\partial t}\right)=\frac{1}{2\left(\cosh\frac{t}{2}+i\frac{\xi}{2}\exp({\frac{t}{2}})\right)^{2}}
	\]	
	from which we obtain
	\begin{align*}
		-i\frac{\partial\psi}{\partial t}+\frac{1}{\sinh\tau}\frac{\partial\tau}{\partial t}&=\frac{1}{2\left(\cosh\frac{t}{2}+i\frac{\xi}{2}\exp({\frac{t}{2}})\right)\left(\sinh\frac{t}{2}-i\frac{\xi}{2}\exp({\frac{t}{2}})\right)}\\  &=\frac{1}{2\alpha\beta \exp({-i\theta})}=\frac{\exp({i\left(\theta-\varphi\right)})}{\sinh\tau}
	\end{align*}
	therefore, we have $\frac{\partial\psi}{\partial t}=\frac{-\sin\left(\theta-\varphi\right)}{\sinh\tau}$,
	$\frac{\partial\tau}{\partial t}=\cos\left(\theta-\varphi\right)$.
	Similarly, we can find 
	\begin{equation*}
		-i\frac{\partial\psi}{\partial\xi}+\frac{1}{\sinh\tau}\frac{\partial\tau}{\partial\xi}=-i\exp({t})\frac{\exp({i\left(\theta-\varphi\right)})}{\sinh\tau},
	\end{equation*}
	which gives $\frac{\partial\psi}{\partial\xi}=\frac{\exp({t})\cos\left(\theta-\varphi\right)}{\sinh\tau}$,
	$\frac{\partial\tau}{\partial\xi}=\exp({t})\sin\left(\theta-\varphi\right)$
	from which we deduce that 
	\begin{equation*}
		\bigg|\frac{\partial\left(\varphi,\thinspace\tau,\thinspace\psi\right)}{\partial\left(\theta,\thinspace t,\thinspace\xi\right)}\bigg|=\frac{\exp({t})}{\sinh\tau}
	\end{equation*}
	thus by transforming the Haar integral in proposition \eqref{2.4} we
	have
	\begin{align*}
		\int_{\mathrm{G}}f\left(g\right)dg&=\int_{k}\int_{-\infty}^{\infty}\int_{-\infty}^{\infty}f\left(kd^{\frac{1}{2}}_{\tau}n_{\xi}\right)\exp({\tau})dkdtd\xi\\
		&=\int_{k}\int_{0}^{\infty}\int_{0}^{2\pi}f\left(kk_{\theta}^{-1}d^{\frac{1}{2}}_{t}k_{\psi}\right)\sinh tdkdtd\psi\\
		&=\int_{k}\int_{0}^{\infty}\int_{0}^{2\pi}f\left(kd^{\frac{1}{2}}_{t}k_{\psi+2\pi}\right)\sinh tdkdtd\psi\\
		&=\frac{1}{2}\int_{k}\int_{0}^{\infty}\int_{0}^{4\pi}f\left(kd^{\frac{1}{2}}_{t}k_{\psi}\right)\sinh tdkdtd\psi\\
		&=2\pi\int_{K}\int_{0}^{\infty}\int_{0}^{4\pi}f\left(kd^{\frac{1}{2}}_{t}k^{\prime}\right)\sinh tdkdtdk^{\prime}
	\end{align*}
	using the relation $d^{\frac{1}{2}}_{t}k_{\psi}=k_{2\pi}d^{\frac{1}{2}}_{\tau}k_{\psi+2\pi}$ and $\hbar \cdotp kd^{\frac{1}{2}}_{t}k^{\prime}=\dfrac{h}{2\pi} kd^{\frac{1}{2}}_{t}k^{\prime}$, one can easily obtain

    \begin{align*}
\int_{\mathrm{G}}f\left(g\right)dg=h\int_{K}\int_{0}^{\infty}\int_{K}f\left(kd^{\frac{1}{2}}_{t}k^{\prime}\right)\sinh tdkdtdk^{\prime}.
\end{align*}
    
\end{proof}

\begin{thm}\label{2.7}\cite{takahashi1961fonctions}
Let $S_{-}$, $S_{+}$, $S_{z}$ be basis of the $\spin_{\BR}(\frac12)$ Lie algebra with $\hbar=1$, then the following results hold:
	\begin{enumerate}
		\item[(1)] For any $\varphi\in \mathcal{H_{\infty}}(L^{2}(U)_{\infty})$ we have
		\begin{align*}
			dV^{j,s}_{S_x}\varphi=&\frac{1}{2}((s-j)\zeta+(s+j)\zeta^{-1})\varphi-\frac{i}{2}(\zeta-\zeta^{-1})(\mathbb{D}\varphi)\zeta,\\ 
			dV^{j,s}_{S_y}\varphi=&\frac{i}{2}((s-j)\zeta-(s+j)\zeta^{-1})\varphi+\frac{1}{2}(\zeta+\zeta^{-1})(\mathbb{D}\varphi)\zeta,
		\end{align*}
		where $\mathbb{D}$ denotes the differential operator given by
		\begin{gather*}
			\mathbb{D}\varphi(\zeta)=\underset{t \longrightarrow 0}\lim t^{-1}[\varphi(\exp(it(\zeta)-\varphi(\zeta)].
		\end{gather*}
		\item[(2)] Let $\varphi_{p}(\zeta)=(\zeta)^{-p}, p\in\mathbb{Z}$. Then
		\begin{equation*}
			\left\lbrace\begin{array}{ll}
				dV^{j,s}_{S_x}\varphi=\frac{s-p-j}{2}\varphi_{p-1}+\frac{s+p+j}{2}\varphi_{p+1},\\
				dV^{j,s}_{S_y}\varphi=i\frac{s-p-j}{2}\varphi_{p-1}-i\frac{s+p+j}{2}\varphi_{p+1}.
			\end{array}
			\right.
		\end{equation*}
		\item[(3)] 	Given that
		$S_{+}=S_x+iS_y$ and $S_{-}=S_x-iS_y$ are the ladder operators of a spin particle with real Lie algebra $g$, for $S \in g (\spin_{\BR}(\frac12))$, we can extend the definition of $dV^{j,s}_{S}$ to the ladder operators, $S_{+}$ and $S_{-}$, formally by
		\begin{gather*}
			dV^{j,s}_{S_{\pm}}=dV^{j,s}_{S_{x}} \pm i dV^{j,s}_{S_{y}}, \quad where \quad  S_{\pm}=S_x\pm iS_y, \quad and \quad  S_{x},S_{y}\in g.
		\end{gather*}
		Then we have
		\begin{equation*}
			dV^{j,s}_{S_{+}}\varphi_p=(s+p+j)\varphi_{p+1}, 
		\end{equation*}
		and
		\begin{equation*}
			dV^{j,s}_{S_{-}}\varphi_p=(s-p-j)\varphi_{p-1}.
		\end{equation*}
	\end{enumerate}
\end{thm}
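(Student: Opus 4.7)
The plan is to compute $dV^{j,s}_{S_x}\varphi$ and $dV^{j,s}_{S_y}\varphi$ from the infinitesimal formula \eqref{equ3.12} by differentiating the explicit action \eqref{equ2.10} along the one-parameter subgroups of $\exp(tS_x)$ and $\exp(tS_y)$. Under the Cayley transform of Theorem \ref{thm2.2}(1), $\exp(tS_x)$ corresponds to $\alpha(t)=\cosh(t/2),\ \beta(t)=\sinh(t/2)$ and $\exp(tS_y)$ corresponds to $\alpha(t)=\cosh(t/2),\ \beta(t)=-i\sinh(t/2)$ in the $\SU(1,1)$-quasi boson picture. Once Part (1) is established, Parts (2) and (3) follow by substitution of the basis $\varphi_p$ and by a linear combination, respectively.

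For Part (1), fix $g=\exp(tS_x)$; the $S_y$ case is analogous. The explicit formula \eqref{equ2.10} factors into three pieces: $|w|^{-2s}=(w\bar w)^{-s}$, $(w/|w|)^{2j}$, and $\varphi(g^{-1}\cdot\zeta)$, where $w(t,\zeta)=\bar\beta\zeta+\bar\alpha$. Differentiating logarithmically at $t=0$ (where $w=\bar w=1$, and $\dot w=-\zeta/2$ under the inverse convention consistent with the symbolic line of \eqref{equ2.10}) gives
\begin{align*}
\tfrac{d}{dt}(w\bar w)^{-s}\big|_{0} &= -s(\dot w+\dot{\bar w}) = \tfrac{s}{2}(\zeta+\zeta^{-1}),\\
\tfrac{d}{dt}(w/|w|)^{2j}\big|_{0} &= 2j\dot w - j(\dot w+\dot{\bar w}) = -\tfrac{j}{2}(\zeta-\zeta^{-1}),
\end{align*}
which combine into the scalar prefactor $\tfrac{1}{2}((s-j)\zeta+(s+j)\zeta^{-1})\varphi$. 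For the third factor, a direct quotient-rule calculation yields $\tfrac{d}{dt}(g_{t}^{-1}\cdot\zeta)|_{0}=(\zeta^{2}-1)/2$, and the identity $\mathbb{D}\varphi(\zeta)=i\zeta\,\varphi'(\zeta)$ (immediate from the definition of $\mathbb{D}$ as $\tfrac{d}{dt}|_{0}\varphi(e^{it}\zeta)$) converts $\varphi'(\zeta)(\zeta^{2}-1)/2$ into $-\tfrac{i}{2}(\zeta-\zeta^{-1})\mathbb{D}\varphi$. Repeating the three substeps with $\dot\beta(0)=i/2$ for $\exp(tS_y)$ reshuffles the real and imaginary contributions into the companion formula for $dV^{j,s}_{S_y}\varphi$.

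Parts (2) and (3) are then formal. For Part (2), substitute $\varphi_{p}(\zeta)=\zeta^{-p}$, noting that $\mathbb{D}\varphi_{p}=-ip\varphi_{p}$ and $\zeta^{\pm 1}\varphi_{p}=\varphi_{p\mp 1}$; the two resulting monomials combine into the claimed two-term recurrence with coefficients $\tfrac12(s-p-j)$ and $\tfrac12(s+p+j)$. Part (3) is pure linear algebra: forming $dV^{j,s}_{S_x}\varphi_{p}\pm i\,dV^{j,s}_{S_y}\varphi_{p}$ causes the $\varphi_{p\mp 1}$ coefficients to cancel and doubles the surviving $\varphi_{p\pm 1}$ coefficient, yielding the ladder-operator formulas $dV^{j,s}_{S_\pm}\varphi_{p}=(s\pm(p+j))\varphi_{p\pm 1}$.

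The main obstacle is bookkeeping of conventions: the two lines of \eqref{equ2.10} are written in a way that mixes the displayed actions of $g$ and $g^{-1}$, so one must first commit to a uniform interpretation (and correspondingly fix whether to substitute $\beta$ or $-\beta$ when differentiating) before the signs in the statement come out correctly. A minor companion subtlety is selecting the principal branch of the fractional power $(w/|w|)^{2j}$ for $j=\tfrac12$, which is harmless since $w=1$ at $t=0$. With these conventions pinned down, the remainder is a chain-rule calculation in the style of the classical derivation of the infinitesimal characters of the $\SU(1,1)$ principal series.
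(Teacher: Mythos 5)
Your proposal is correct and follows essentially the same route as the paper: differentiate the three-factor formula for $V^{j,s}_{\exp(tS)}\varphi$ at $t=0$ along the one-parameter subgroups generated by $S_x$ and $S_y$ (computing the derivatives of $\exp(-st(\cdot,\zeta))$, $u(\cdot,\zeta)^{2j}$, and $\varphi(g^{-1}\cdot\zeta)$, the last via $\tfrac{d}{dt}(g_t^{-1}\cdot\zeta)|_0=(\zeta^2-1)/2$ and the operator $\mathbb{D}$), then substitute $\varphi_p$ for (2) and take the combinations $dV^{j,s}_{S_x}\pm i\,dV^{j,s}_{S_y}$ for (3). Your logarithmic-differentiation packaging and explicit use of $\mathbb{D}\varphi=i\zeta\varphi'$ are only cosmetic variants of the paper's parametrization $d^{\frac12}_{-t}\cdot\zeta=\exp(i\theta(t))$, and your remark about fixing the $g$ versus $g^{-1}$ convention in \eqref{equ2.10} correctly identifies the one bookkeeping point the paper also has to resolve.
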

\begin{proof}[Proof of (1)]\cite{takahashi1961fonctions}
	\begin{enumerate}
		\item [] Let $d^{\frac{1}{2}}_{t}=\exp t(S_{x})$, then we obtain
		\begin{align}\label{4.18}
			\left(V^{j,s}_{d_{t}^{\frac{1}{2}}}\varphi \right)(\zeta)=	\exp({-st(d_{-t}^{\frac{1}{2}},\zeta)})u(d_{-t}^{\frac{1}{2}},\zeta)^{2j} \varphi(d_{-t}^{\frac{1}{2}}\cdot\zeta)
		\end{align}
		and 
		\begin{align*}
			d^{\frac{1}{2}}_{t}=\exp t(S_x)=\begin{pmatrix}
				\cosh \frac{t}{2} & \sinh \frac{t}{2}\\ \sinh \frac{t}{2} &\cosh \frac{t}{2}
			\end{pmatrix}.
		\end{align*}
		We shall differentiate equation (\ref{4.18}) with respect to $t$ and set $t=0$. First we look at the product of 
		\begin{align}\label{4.19}
			\exp({-st(d_{-t}^{\frac{1}{2}},\zeta)})u(d_{-t}^{\frac{1}{2}},\zeta)^{2j}.
		\end{align}
		From $$\exp({t(g,\zeta)})=|\bar{\beta}\zeta+\bar{\alpha}|^2=|{\alpha}\zeta+{\beta}|^2$$ with $\zeta=\exp(i\theta)$. We put 
		\begin{align}
			\exp\left({t\left(d^{\frac{1}{2}}_{t},\zeta\right)}\right) \nonumber &=\mid\alpha \zeta+\beta\mid^{2}=|\bar{\beta}\zeta+\bar{\alpha}|^2  \\ \nonumber &=\bigg|\zeta\sinh\frac{t}{2}+\cosh\frac{t}{2}\bigg|^{2}.
		\end{align}
		When we put
		\begin{gather}\label{4.20}
			d_{-t}^{\frac{1}{2}}=g^{-1}= \begin{pmatrix}
				\bar{\alpha} & -\beta \\ \ -\bar{\beta} &\alpha
			\end{pmatrix}=\begin{pmatrix}
				\cosh \frac{t}{2} & -\sinh \frac{t}{2}\\ -\sinh \frac{t}{2} &\cosh \frac{t}{2}
			\end{pmatrix},
		\end{gather}
		we obtain
		\begin{gather}
			\exp\left({t\left(d^{\frac{1}{2}}_{-t},\zeta\right)}\right) \nonumber =\mid\bar{\beta}\zeta+\bar{\alpha}\mid^2  =\bigg|-\zeta\sinh\frac{t}{2}+\cosh\frac{t}{2}\bigg|^{2} \\
			\dfrac{\exp\left({t\left(d^{\frac{1}{2}}_{-t},\zeta\right)}\right)}{2} \nonumber  =\bigg|-\zeta\sinh\frac{t}{2}+\cosh\frac{t}{2}\bigg|.
		\end{gather}
		For	$$u(g,\zeta)=\dfrac{\bar{\beta} \zeta+\bar{\alpha}}{|\bar{\beta} \zeta+\bar{\alpha}|}=\dfrac{\bar{\beta} \zeta+\bar{\alpha}}{\exp({t(g,\zeta)})},$$
		similarly, 
		$$u(g^{-1},\zeta)=\dfrac{\bar{\beta} \zeta+\bar{\alpha}}{|\bar{\beta} \zeta+\bar{\alpha}|}=\dfrac{\bar{\beta} \zeta+\bar{\alpha}}{\exp({t(g^{-1},\zeta)})}.$$ Now from equation (\ref{4.20})
		\begin{gather*}
			u\left(d^{\frac{1}{2}}_{-t},\zeta\right)=\left(\dfrac{\exp\left({t\left(d^{\frac{1}{2}}_{-t},\zeta\right)}\right)}{2}\right)^{-1}(\bar{\beta} \zeta+\bar{\alpha})
		\end{gather*}
		similarly from equation (\ref{4.19}) we have:
		\begin{align*}
			(\exp({st(d_{-t}^{\frac{1}{2}},\zeta)}))^{-1}u(d_{-t}^{\frac{1}{2}},\zeta)^{2j}&=\exp((-(s+j))({t(d_{-t}^{\frac{1}{2}},\zeta)})) \\ \nonumber &\times
			\left(-\zeta\sinh\frac{t}{2}+\cosh\frac{t}{2}\right)^{2j}.
		\end{align*}
		We shall now differentiate term by term the following equation:
		\begin{align}
			\exp((-(s+j))({t(d_{-t}^{\frac{1}{2}},\zeta)})) \left(-\zeta\sinh\frac{t}{2}+\cosh\frac{t}{2}\right)^{2j}
		\end{align}
		\begin{align}
			\exp\left({t\left(d^{\frac{1}{2}}_{-t},\zeta\right)}\right) \nonumber  =\bigg|-\zeta\sinh\frac{t}{2}+\cosh\frac{t}{2}\bigg|^{2}=\cosh\frac{t}{2}-\bigg(\dfrac{\zeta+\bar{\zeta}}{2}\bigg)\sinh\dfrac{t}{2}.
		\end{align}
		Hence 
		\begin{align*}
			\dfrac{d}{dt}\exp\bigg(t\left(d^{\frac{1}{2}}_{-t},\zeta\right)\bigg)\bigg|_{t=0}=-\dfrac{\zeta+\zeta^{-1}}{2}.
		\end{align*}
		We also obtain
		\begin{gather*}
			\dfrac{d}{dt}\left(-\zeta\sinh\frac{t}{2}+\cosh\frac{t}{2}\right)\bigg|_{t=0}=-\dfrac{\zeta}{2}.
		\end{gather*}
		To deal with the third factor of the right member of equation (\ref{4.18}), we first calculate
		\begin{gather*}\label{4.24}
			\dfrac{d}{dt}\left(d_{-t}^{\frac{1}{2}}\cdot\zeta\right)\bigg|_{t=0}=\dfrac{d}{dt}\bigg(\dfrac{\zeta\cosh\frac{t}{2}-\sinh\frac{t}{2}}{-\zeta\sinh\frac{t}{2}+\cosh\frac{t}{2})}\bigg)\bigg|_{t=0}=\dfrac{\zeta^{2}-1}{2}.
		\end{gather*}
	\end{enumerate}	
	If we set $d_{-t}^{\frac{1}{2}}\cdot\zeta=\exp (i\theta(t))$ and write $\varphi(d_{-t}^{\frac{1}{2}}\cdot\zeta)=f(\theta(t))$, then we have 
	\begin{align*}
		\dfrac{d}{dt}\varphi(d_{-t}^{\frac{1}{2}}\cdot\zeta)\bigg|_{t=0}=f^{\prime}(\theta(0))\theta^{\prime}(0).
	\end{align*}
	Note that;
	\begin{gather*}
		f^{\prime}(\theta(0))=\mathbb{D}\varphi(\zeta)=\underset{t \longrightarrow 0}\lim t^{-1}[\varphi(\exp(it(\zeta)-\varphi(\zeta)].  
	\end{gather*}
	We obtain $\theta^{\prime}(0)$ from $$\dfrac{d}{dt}(d_{-t}^{\frac{1}{2}}\cdot\zeta)|_{t=0}=\exp (i\theta(t))i\theta^{\prime}(0)=i\zeta\theta^{\prime}(0),$$ which, combined with equation (\ref{4.24}), yields $\theta^{\prime}(0)=\dfrac{\zeta-\zeta^{-1}}{2i}$. Putting these together, we obtain: 
	\begin{gather*}
		(dV_{S_{x}}^{j,s}\varphi)(\zeta)=(j+s)\dfrac{\zeta+\zeta^{-1}}{2}\varphi+2j \left(-\dfrac{\zeta}{2}\right)\varphi(\zeta)-\dfrac{i}{2}(\zeta-\zeta^{-1})(\mathbb{D}\varphi)\zeta,
	\end{gather*}
	giving the first formula in Theorem \ref{2.7} $(1).$ If we replace the $d_{-t}^{\frac{1}{2}}$ above by 
	\begin{equation*}
		g_{t}=\exp t(S_y)=\begin{pmatrix}
			\cosh \frac{t}{2} & -i\sinh \frac{t}{2}\\ i\sinh \frac{t}{2} &\cosh \frac{t}{2}
		\end{pmatrix}
	\end{equation*}
	and make the corresponding changes all the way through, then we arrive at the second formula of Theorem \ref{2.7} $(1)$.
\end{proof}
\begin{proof}[Proof of 2]
	To do this we first apply Theorem \eqref{2.7}(1) to $\varphi=\varphi_{p}$ and make use of the following;
	$$\zeta \varphi_{p}=\varphi_{p-1}, \quad \zeta^{-1}\varphi_{p}=\varphi_{p+1}, \quad \mathbb{D}\varphi_{p}=-ip\varphi_{p}.$$
\end{proof}
\begin{proof}[Proof of 3]
	This is a simple computation by applying Theorem\eqref{2.7} (1),(2).
\end{proof}

\section{Application of Kangni Spherical Fourier Transform of the Type Delta to Spin particles	}\label{SFT}
In this section we shall follow strictly the work of the second author \cite{kangni2001transformation} and extend the results to the spin particle  and the quasi-boson.
\begin{thm}\cite{kangni2001transformation}
	For any element $g\in \SU(1,1)$-quasi boson spin Lie group and $\mu$  a linear form on $A$. The spherical Fourier transformation of type $\chi_n$ on $\mathrm{G}$ defined by:
	\begin{align*}
		\phi^\mu _n (f)&= \frac{h}{8\pi^{2}} \int_G \int_0 ^{4\pi}   \left( \frac{-\beta \zeta+ \alpha}{|\alpha -\beta \zeta|} \right)^{2n}\ \  f(g)
		\mbox{exp} \ [\mu (log\  d_t^{\frac{1}{2}})] \ dg\  d\theta\\
		&\mbox{where}\quad \zeta=\exp(i\theta) \quad\mbox{and}\quad
		g=\begin{pmatrix}
			\alpha & \beta  \\
			\bar{\beta} & \bar{\alpha} \\
		\end{pmatrix}  \mbox{for all f}\in \mathcal{K}^\natural _n (\mathrm{G})).
	\end{align*}
When $\hbar=1$ we get the Kangni-type transform \eqref{KTT}.
\end{thm}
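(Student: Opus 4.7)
The plan is to show that the stated integral formula for $\phi_n^\mu(f)$ arises as the Kangni-type spherical Fourier transform of Definition \ref{KTT} applied to $G = \SU(1,1)$-quasi boson with $\delta = \chi_n$, and then verify the $\hbar=1$ limit.

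\textbf{Step 1 (kernel identification).} For $g = \bigl(\begin{smallmatrix}\alpha & \beta\\ \bar\beta & \bar\alpha\end{smallmatrix}\bigr) \in \SU(1,1)$ one has $g^{-1} = \bigl(\begin{smallmatrix}\bar\alpha & -\beta\\ -\bar\beta & \alpha\end{smallmatrix}\bigr)$, so applying the definition of $u(\,\cdot\,,\zeta)$ preceding \eqref{equ3.6} gives $u(g^{-1},\zeta) = (\alpha - \beta\zeta)/|\alpha - \beta\zeta|$. Consequently the kernel in the theorem equals $u(g^{-1},\zeta)^{2n}$, which is precisely the $K$-twist appearing in the principal series $V^{n,s}$ of \eqref{equ2.10}.

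\textbf{Step 2 (matrix-coefficient realization of the spherical function).} Using $\varphi_0 \equiv 1$ in \eqref{equ3.10} and the relation $V^{n,s}_{u_\theta}\varphi_0 = \chi_n(u_\theta)\varphi_0$ from \eqref{equ2.13}, I would identify the spherical function of type $\chi_n$ on $G$ with the matrix coefficient
\[
\Phi_n^{s}(g) \;=\; \int_0^{4\pi} u(g^{-1}, e^{i\theta})^{2n}\,\exp\bigl[-s\,t(g^{-1},e^{i\theta})\bigr]\, d\theta,
\]
which by equations \eqref{equ3.4}, \eqref{equ3.6} satisfies the two-sided $\chi_n$-invariance and the functional equation of Section 2.1.

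\textbf{Step 3 (unfolding the Kangni-type transform over $G$).} Starting from Definition \ref{KTT} and substituting the Abel transform
$F_f^{\chi_n}(d_t^{1/2}) = e^{t\rho}\int_{K\times N} f(k d_t^{1/2} n)\chi_n(k^{-1})\,dk\,dn$,
I would swap the Mellin integration with the $KAN$ integration and use the Iwasawa decomposition together with the Haar measure of Proposition \ref{2.4} (which carries the factor $\tfrac{\hbar}{4\pi}e^t$). Changing variables via $g = k_\theta d_t^{1/2} n_\xi$ and inserting the matrix-coefficient representation of $\chi_n(k(g)^{-1})$ obtained in Step 2, one arrives at
\[
\phi_n^\mu(f) \;=\; \frac{\hbar}{4\pi}\int_G \int_0^{4\pi} u(g^{-1},e^{i\theta})^{2n}\, f(g)\,\exp\!\bigl[\mu(\log d_t^{1/2})\bigr]\, d\theta\, dg.
\]
Using the electron normalization $\hbar = h/(2\pi)$ from Proposition \ref{2.4}(2), the prefactor becomes $h/(8\pi^2)$, matching the statement.

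\textbf{Step 4 ($\hbar = 1$ reduction).} Setting $\hbar=1$ forces $h = 2\pi$, so $h/(8\pi^2) = 1/(4\pi)$, which is exactly the normalization used in Definition \ref{KTT}. At the same time the Iwasawa decomposition of Theorem \ref{thm2.2}(2) collapses to its classical form with no $\text{\CYRZHDSC}$ factor, so $\phi_n^\mu(f)$ reduces to a genuine Kangni-type transform.

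\textbf{Main obstacle.} The delicate step is the integration reshuffle in Step 3: one must carefully reconcile (i) the $\rho$-shift built into the Abel transform, (ii) the factor $e^t$ from the $\SU(1,1)$ Haar measure, and (iii) the spectral parameter $s$ hidden in the matrix coefficient of $V^{n,s}$, so that together they produce exactly the homogeneity factor $\exp[\mu(\log d_t^{1/2})]$ displayed in the theorem. Once this Mellin–Abel bookkeeping is done, the remaining identifications are direct applications of the cocycle relation \eqref{equ3.6} and Fubini.
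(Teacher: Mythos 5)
Your skeleton follows the same route as the paper: unfold the Abel transformation of type $\chi_n$ in Iwasawa coordinates $g=u_\theta d_t^{1/2}n_\xi$, absorb the measure factor $\tfrac{\hbar}{4\pi}e^{t}\,d\theta\,dt\,d\xi$ into $dg$, and identify the resulting kernel with the cocycle $u(g^{-1},e^{i\theta})^{2n}=\bigl(\tfrac{\alpha-\beta\zeta}{|\alpha-\beta\zeta|}\bigr)^{2n}$; the prefactor arithmetic $\tfrac{\hbar}{4\pi}=\tfrac{h}{8\pi^2}$ is also exactly the paper's. However, there is a genuine gap at the centre of your argument, and it sits precisely where you flag your ``main obstacle.'' Your Step 2 proposes to insert the matrix coefficient
\[
\Phi_n^{s}(g)=\int_0^{4\pi}u(g^{-1},e^{i\theta})^{2n}\exp\bigl[-s\,t(g^{-1},e^{i\theta})\bigr]\,d\theta ,
\]
but this is not the object that appears in the theorem: the stated integrand carries no factor $\exp[-s\,t(g^{-1},\zeta)]$, so inserting $\Phi_n^s$ would leave a spurious exponential that you never eliminate. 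The reconciliation of the $\rho$-shift, the $e^{t}$ from the Haar measure, and the hidden $s$ is not a bookkeeping afterthought; it is the proof, and as written your argument does not close it.

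The paper's mechanism for producing the $\theta$-integral is different and more elementary. After unfolding the Abel transform one has $\phi_n^\mu(f)=\int_G f(g)\,\chi_n(K(g^{-1}))\exp[\mu(\log d_t^{1/2})]\,dg$, where $K(g^{-1})$ is the Iwasawa $K$-component. Because $f\in\mathcal{K}^\natural_n(G)$ is invariant under conjugation by $K$, one may replace $\chi_n(K(g^{-1}))$ by its conjugation average $\tfrac{1}{4\pi}\int_0^{4\pi}\chi_n(K(u_{-\theta}g^{-1}u_\theta))\,d\theta$ (this is the ``type delta'' averaging $\zeta_{n,s}(g)=\int_K\alpha_{n,s}(k^{-1}gk)\,dk$ from the preliminaries); the cocycle identity of Theorem \ref{thm2.2}(3), namely $\exp[i(-g^{-1}\cdot\theta+\theta)/2]=u(g^{-1},e^{i\theta})$, then converts the averaged character directly into the kernel $u(g^{-1},e^{i\theta})^{2n}$, while the factor $\exp[\mu(\log d_{t(g^{-1},\theta)}^{1/2})]$ comes solely from the Mellin integral over $A$ against $e^{t/2}$ already present in $F_f^{\langle n\rangle}$ --- no principal-series spectral parameter $s$ enters at any stage. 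To repair your proof, replace Step 2 by this $K$-conjugation average and drop the matrix-coefficient realization; your Steps 1, 3 and 4 then go through essentially as in the paper.
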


\begin{proof}\cite{kangni2001transformation}
	Let $f\mapsto F_f ^{<n>}$ be Abel's transformation generalised on
	$G$ following the class $\chi_n$. We have:
	$$F_f ^{<n>}(t) = \frac{\hbar \exp({t/2})}{4\pi} \int_0 ^{4\pi}  \int_{- \infty} ^{+ \infty} f(u_\theta d_t^{\frac{1}{2}} n_\xi) \exp({-in\theta}) d\xi d\theta.$$
	Thus the spherical Fourier's transformation of type $\chi_n$ is
	defined by:
	$$ \phi_n ^\mu = \int_{- \infty} ^{+ \infty} F_f ^{<n>}(t) \exp({t/2}) \mbox{exp}[\mu (log d_t^{\frac{1}{2}})]
	dt.$$  If $g= \hbar u_\theta d_t^{\frac{1}{2}} n_\xi $, then $dg=
	\frac{\hbar}{4\pi} \exp({t/2}) dt\  d\xi) .$ Therefore
	\begin{align*}
		\phi_n ^\mu (f) &= \frac{\hbar}{4\pi} \int_0 ^{4\pi} \int_{- \infty} ^{+ \infty} \int_{- \infty} ^{+ \infty} f(u_\theta d_t^{\frac{1}{2}} n_\xi) \chi_n(u_{-\theta}) \exp(t) exp[\mu(log d_t^{\frac{1}{2}})] d\theta\  dt\  d\xi \\
		&= \int_G f(g) \chi_n (K(g^{-1})) \mbox{exp} [\mu(log \ d_t^{\frac{1}{2}})] dg \\
		&= \frac{\hbar}{4\pi} \int_G \int_G ^{4\pi} f(g) \chi_n (K(u_{- \theta}g^{-1}u_\theta)) \mbox{exp} [\mu(log \ d_t^{\frac{1}{2}})] dg\  d\theta\\
		&= \frac{\hbar}{4\pi} \int_G \int_G ^{4\pi} f(g) \mbox{exp} [in(- \theta + g^{-1} \theta) + \mu(log \ d_t^{\frac{1}{2}})] dg\  d\theta\\
		&=\frac{\hbar}{4\pi} \int_G \int_G ^{4\pi} f(g) [\mbox{exp}
		(-ig^{-1}\cdotp\theta/2 + i\theta/2)]^{2n} \mbox{exp}[\mu(\log \ d_t^{\frac{1}{2}})] dg\
		d\theta
	\end{align*}
	We deduce from theorem \ref{thm2.2} that if $g=
	\begin{pmatrix}
		\alpha & \beta  \\
		\bar{\beta} & \bar{\alpha} \\
	\end{pmatrix} \in \SU(1,1)$-quasi boson we have:
	\begin{eqnarray}
		\exp({i(g.\theta)/2}) = \frac{\alpha \exp({i\theta/2}) + \beta
			\exp({-i\theta/2})}{|\alpha \exp({i\theta/2}) + \beta \exp({-i\theta/2})|}
	\end{eqnarray}
	
	Putting: $u(g, \zeta)= \frac{\beta\zeta +\alpha}{|\beta\zeta +\alpha|}$  we have $$\mbox{exp} [-ig^{-1}.\theta/2 +
	i\theta/2] = u(g^{-1}, \exp({i\theta})).$$ Hence
	$$ \phi_n ^\mu (f) = \frac{h}{8\pi^{2}} \int_G \int_G ^{4\pi} u (g^{-1}, e^{i\theta})^{2n} \mbox{exp}\left[\mu \left(\log \ d_ {t_{(g^{-1},\theta)}}^{\frac{1}{2}}\right)\right] dg\
	d\theta.$$
\end{proof}

\begin{thm}\cite{kangni2001transformation}
	Let $\omega \in G$ and $u_\alpha\ d_s^{\frac{1}{2}} \ u_\beta$ be as
in lemma\eqref{lem2.3}. For all function $f \in
	\mathcal{K}^\natural _n (G)$ we have:
	$$\phi^\mu _n (_\omega f) = \frac{\phi^\mu _n (f)}{\exp({in(\alpha +d_s^{\frac{1}{2}}\cdot \beta)+s/2})}\ ,$$
	where $_\omega f$ is defined for all $x \in G$ by $_\omega f(x) =
	f(\omega^{-1} x)$ and $d_s^{\frac{1}{2}}. \beta$ is the rotation angle  $s$ around the $y-$axis in the $\SU(1,1)$-quasi boson space $R(s)=\exp(-isK_{y})$ which is a fix choice of phase as in equation \ref{rota}, that is, the $d-$function given as $d_s^{\frac{1}{2}} u_\beta.$
\end{thm}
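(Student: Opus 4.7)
The strategy is to start from the explicit integral formula for $\phi_n^\mu$ established in the preceding theorem, perform a left translation of the integration variable by $\omega$ using the left-invariance of the Haar measure on $\SU(1,1)$ (Proposition \ref{2.4}), and then evaluate the resulting $\omega$-dependent prefactor using the $KAK$ decomposition supplied by Lemma \ref{lem2.3}.

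First I would substitute $_\omega f(g) = f(\omega^{-1}g)$ into the formula from the preceding theorem and change variables $g \mapsto \omega g$; by left-invariance, $dg$ is unchanged and $f(\omega^{-1}g)$ collapses to $f(g)$, while the kernel factors now depend on $(\omega g)^{-1} = g^{-1}\omega^{-1}$. The $\omega^{-1}$ can then be peeled off using the cocycle identity \eqref{equ3.6},
$$u(g^{-1}\omega^{-1}, \zeta) = u(g^{-1}, \omega^{-1}\!\cdot\!\zeta)\, u(\omega^{-1}, \zeta),$$
together with the additivity $t(g^{-1}\omega^{-1}, \zeta) = t(g^{-1}, \omega^{-1}\!\cdot\!\zeta) + t(\omega^{-1}, \zeta)$, which enters $\mu$ linearly since $d_t^{\frac{1}{2}} = \exp(t\sigma_x/2)$. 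A further change of variable $\zeta \mapsto \omega\!\cdot\!\zeta$ in the $\theta$-integral restores the inner integrand to the defining kernel of $\phi_n^\mu(f)$; the accompanying Jacobian from \eqref{equ3.4}, namely $d(\omega\!\cdot\!\zeta) = \exp(-t(\omega,\zeta))\,d\zeta$, combines with the peeled-off factors to leave a single $\omega$-dependent multiplicative prefactor times $\phi_n^\mu(f)$.

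It then remains to identify this prefactor on the $KAK$ pieces of $\omega = u_\alpha d_s^{\frac{1}{2}} u_\beta$. The leftmost rotation $u_\alpha$ acts by a pure phase on the circle and contributes $\chi_n(u_\alpha)^{-1} = \exp(-in\alpha)$. The $A$-piece $d_s^{\frac{1}{2}}$ contributes the modulus factor $\exp(-s/2)$ via the pairing of $\mu$ with $\log d_s^{\frac{1}{2}}$ together with the Jacobian of the corresponding M\"obius action on $U$. The rightmost rotation $u_\beta$, because it sits to the right of $d_s^{\frac{1}{2}}$ in the decomposition, acts on the circle only after the M\"obius transformation $\zeta \mapsto d_s^{\frac{1}{2}}\!\cdot\!\zeta$, so its phase contribution is the twisted $\exp(-in(d_s^{\frac{1}{2}}\!\cdot\!\beta))$ rather than $\exp(-in\beta)$. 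Assembling these three contributions in the denominator produces exactly the stated identity. The main obstacle is the careful bookkeeping of this twist in the third piece: tracking how the M\"obius action of $d_s^{\frac{1}{2}}$ on the angle $\beta$ propagates through the cocycle identities \eqref{equ3.4}--\eqref{equ3.6}, and verifying that the bi-$K$ transformation property of $f \in \mathcal{K}_n^\natural(G)$ is invoked at the right moment to absorb an otherwise spurious $\theta$-dependence.
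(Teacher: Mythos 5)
Your route differs from the paper's in one structural respect: you manipulate the double-integral formula for $\phi_n^\mu$ directly, whereas the paper first proves the translation law at the level of the Abel transform, establishing $F_{_\omega f}^{<n>}(t)=e^{-s/2}\chi_n\big(u_{-\alpha-d_s^{1/2}\cdot\beta}\big)F_f^{<n>}(t)$ by writing $\omega^{-1}u_\theta d_t^{1/2}n_\xi=u_{\theta'}d_{t'+t}^{1/2}n_{\xi'e^{-t}+\xi}$ in Iwasawa coordinates, changing variables in $(\theta,\xi)$, and only afterwards integrating against $e^{t/2}\exp[\mu(\log d_t^{1/2})]\,dt$ to reach $\phi_n^\mu$. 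Since the prefactor is independent of $t$, the two organizations are in principle interchangeable, and your cocycle identities \eqref{equ3.4} and \eqref{equ3.6} are exactly the circle-action form of the Iwasawa bookkeeping the paper performs.

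The genuine gap is that your argument stops precisely where the content of the theorem lives. After the substitutions $g\mapsto\omega g$ and $\zeta\mapsto\omega\cdot\zeta$, the factor you peel off is $u(\omega^{-1},\omega\cdot\zeta)^{2n}\exp\big[\mu\big(\log d^{1/2}_{t(\omega^{-1},\omega\cdot\zeta)}\big)\big]e^{-t(\omega,\zeta)}$, and once $\omega$ contains the noncompact piece $d_s^{1/2}$ every one of these factors still depends on $\zeta$: $u(d_s^{1/2},\zeta)$ is not a constant phase and $t(d_s^{1/2},\zeta)$ is not a constant. Asserting that the $A$-piece ``contributes $e^{-s/2}$'' and the right rotation ``contributes $e^{-in\,d_s^{1/2}\cdot\beta}$'' is therefore not bookkeeping to be deferred; it is the claim to be proved, and you explicitly set it aside as ``the main obstacle'' without carrying it out. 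The paper discharges exactly this step inside the Abel transform: after the change of variables $(\theta,\xi)\to(\theta',\xi')$ it invokes the left $K$-equivariance of $f$ and the evaluations $\exp(t(d_s^{-1/2},\gamma))=e^{s/2}$ and $\exp(-in\,d_s^{1/2}\cdot\theta)=e^{-s}e^{-in\theta}$ (via Theorem \ref{thm2.2}(3)) to make the residual $\theta$-dependence cancel before the $t$-integration. Until you exhibit the analogous cancellation in your setup, your computation yields a $\zeta$-dependent kernel rather than the constant $\exp(-in(\alpha+d_s^{1/2}\cdot\beta)-s/2)$ asserted in the statement, so the proposal as written does not yet constitute a proof.
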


\begin{proof}
	Consider Abel's transformation generalised by $f\mapsto F_f ^{<n>}$
	following the class $\chi_n$. Let's show that $f \in \mathcal{K}_n
	^\natural(G)$ and $t \in G$ we have:
	$$ F_{\omega f}^{<n>} = \exp({s/2}) \chi_n \bigg(u_{-\alpha-d_s^{\frac{1}{2}}. \beta}\bigg) F_{f}^{<n>} (t) \ \ \ (\forall\  \omega = u_\alpha d_s^{\frac{1}{2}} u_\beta \in G.)$$
	
	\begin{align*}
		F_{\omega f}^{<n>}(t) &=\frac{\exp({t/2})}{4\pi} \int_{0}^{4\pi} \int_{- \infty}^{+ \infty} (_\omega f(x)f(u_{\theta} d_{t}^{\frac{1}{2}} n_\xi))
         \chi_n (u_\theta^{-1}) d\xi \ d\theta\\
	&=\frac{\exp({t/2})}{4\pi} \int_0 ^{4\pi} \int_{- \infty}^{+ \infty} f(\omega^{-1} u_\theta d_t^{\frac{1}{2}} n_\xi) \chi_n (u_{- \theta})d\xi \  d\theta\\
		&=\frac{\exp({t/2})}{4\pi} \int_{0}^{4\pi} \int_{- \infty}^{+ \infty}
		f( u_{-\beta} d_s^{-\frac{1}{2}} u_{-\alpha} u_\theta d_t^{\frac{1}{2}} n_\xi) \chi_{n} (u_{- \theta})d\xi \ d\theta
	\end{align*}
	By computation, we have:
	
	\begin{align*}
		u_{-\beta} d_s^{-\frac{1}{2}}u_{-\alpha} u_\theta d_t^{\frac{1}{2}} n_\xi &=  u_{d_s^{-\frac{1}{2}}\cdot (\theta-\alpha)-\beta} d_{t (d_s^{-\frac{1}{2}}, \theta-\beta)}^{\frac{1}{2}} n_{\xi(d_s^{-\frac{1}{2}},\theta-\beta)} d_t^{\frac{1}{2}} n_\xi\\
		&=  u_{\theta'}\  d_{t'}^{\frac{1}{2}}\  n_{\xi'}\  d_t^{\frac{1}{2}}\  n_\xi \\
		&=   u_{\theta'}\  d_{t'}^{\frac{1}{2}} \ d_{t'}^{\frac{1}{2}}\  d_t^{-\frac{1}{2}} \ n_{\xi'} d_t^{\frac{1}{2}}\
		n_\xi = u_{\theta'}\   d_{t'+t}^{\frac{1}{2}}\  n_{\xi' \exp({-1}) + \xi}
	\end{align*}
	with $ \theta' =  d_s^{-\frac{1}{2}}\cdot (\theta - \alpha)- \beta$; $t'= t(d_s^{-\frac{1}{2}}, \theta -
	\beta)$ and $\xi'= \xi (d_s^{-\frac{1}{2}}, \theta - \beta).$
Therefore
	\begin{align*}
		F_{\omega f}^{<n>} (t)&=\frac{\exp({t/2})}{4\pi} \int_0 ^{4\pi} \int_{- \infty} ^{+ \infty} f (u_{\theta'}  d_{t^{\prime}+t}^{\frac{1}{2}} n_{\xi'\exp({-t})+\xi}) \chi_n (u_{-\theta})\  d\xi \  d\theta\\
		&=\frac{\exp({t/2})}{4\pi} \int_0 ^{4\pi} \int_{- \infty} ^{+ \infty} f(u_\theta\  d_t^{\frac{1}{2}}\ n_\xi) \exp({-in[\alpha + d_s^{\frac{1}{2}} \theta +\beta]})\\
		&\times \exp({t(d_s^{-\frac{1}{2}},\gamma )}) d\xi \  d\theta\\
	\end{align*}
	
	with $\gamma = d_s^{\frac{1}{2}}\cdot(\theta + \beta) + \alpha.$ Since $d_s^{\frac{1}{2}} =
	\begin{pmatrix}
		\exp({s/2}) & 0  \\
		0 & \exp({-s/2}) \\
	\end{pmatrix}$ we obtain\\
	
	$$ \exp({t(d_s^{-\frac{1}{2}},\gamma)}) = \exp({s/2}) \mbox{\ and } \exp({-in\ d_s^{\frac{1}{2}}\cdot\theta}) =
	\exp({-s}) \exp({-in\theta})$$ (by theorem \ref{thm2.2} (3)) \\
	Thus:\\
	
	\begin{align*}
		F_{\omega }^{<n>} (t) &=\frac{\exp({t/2})}{4\pi}. \exp({-in(\alpha+d_s^{\frac{1}{2}}\cdotp \beta)}) \int_0 ^{4\pi} \int_{- \infty} ^{+ \infty} f ((u_{\theta} d_t^{\frac{1}{2}} n_\xi)\exp({-s/2}) )\\
		&\times \chi_n (u_{-\theta})\  d\xi \  d\theta\\
		&= \exp({-s/2}) \chi_n \bigg(u_{-\alpha-d_s^{\frac{1}{2}}. \beta}\bigg) F_{\omega}^{<n>} (t) \\
	\end{align*}
	we have
	
	\begin{align*}
		\phi_n ^\mu (_\omega f) &= \frac{1}{\exp({s/2})}
		\chi_n\bigg(u_{-\alpha-d_s^{\frac{1}{2}}. \beta}\bigg) \ \phi_n ^\mu (f) \\
		&= \frac{\phi_n ^\mu (f)}{\exp({in(\alpha + d_s^{\frac{1}{2}}\cdot\beta) + s/2})} 
	\end{align*}
\end{proof}

\section{Concluding Remarks}\label{rem}
In this paper, we proved that the $\Spin(\frac12)$ Lie group is isomorphic to the $\SU(1,1)$-quasi boson. The universal enveloping algebra for the $\spin(\frac12)$ is developped and we showed that this is the same as the quantum $\sl_{q}(2,\BR)$ algebra.   We provided the spin decomposition of $\SU(1,1)$-quasi boson spin particle and showed that it is just the Iwasawa decomposition when the fine structure constant $\mbox{\CYRZH}=1$. We constructed the left-invariant Haar measure of the quasi boson and the result is extended to the case of electron in a magnetic field. 
Finally, we demonstrated that the spherical Fourier transformation of the type delta of a $\SU(1,1)$-quasi boson is a Kangni-type transform when the Planck constant, $\hbar=1$.

\bigskip


\section*{Declarations}

\subsection*{Conflict of interest}
The authors declare that they have no competing interests.

\bibliographystyle{spmpsci}
\bibliography{myBibLib} 

\end{document}